\newcommand{\pfrule}[2]{{{\frac{\hbox{{$#1$}}}
 {{\strut \hbox{{$#2$}}}}}}}
\newcommand{\onthetop}[2]{{\renewcommand{\arraystretch}{1.1}
\begin{tabular}{c}\mbox{\!#1\!}\\\mbox{\!#2\!}\end{tabular}}}
\newcommand{\Empty}{\ensuremath{\mbox{\sc Empty}}}
\newcommand{\tr}{\ensuremath{tt}}
\newcommand{\Qe}{\ensuremath{Q_{\exists}}}
\newcommand{\Qu}{\ensuremath{Q_{\forall}}}
\newcommand{\act}{\mbox{${A}${\sl ct}}}
\newcommand{\actc}{\ensuremath{\act_c}}
\newcommand{\actr}{\ensuremath{\act_r}}
\newcommand{\acti}{\ensuremath{\act_i}}
\newcommand{\simul}{\ensuremath{\sqsubseteq_s}}
\newcommand{\simuleq}{\ensuremath{=_s}}
\newcommand{\csimul}{\ensuremath{\sqsubseteq_{cs}}}
\newcommand{\csimuleq}{\ensuremath{=_{cs}}}
\newcommand{\rsimul}{\ensuremath{\sqsubseteq_{rs}}}
\newcommand{\rsimuleq}{\ensuremath{=_{rs}}}
\newcommand{\tnsimul}{\ensuremath{\sqsubseteq_{2s}}}
\newcommand{\tnsimuleq}{\ensuremath{=_{2s}}}
\newcommand{\pre}{\mathit{pre}}
\newcommand{\post}{\mathit{post}}
\newcommand{\defin}{\stackrel{\rm def}{=}}
\newcommand{\goes}[1]{\ensuremath{\stackrel{#1}{\longrightarrow}}}
\newcommand{\goesw}[1]{\ensuremath{\stackrel{#1}{\Longrightarrow}}}
\newcommand{\must}[1]{[ #1 ]}
\newcommand{\may}[1]{\langle #1 \rangle}
\def\doi{5 (1:2) 2009}
\begin{document}

\title[Beyond Language Equivalence on vPDA]
{Beyond Language Equivalence on \\ Visibly Pushdown Automata\rsuper*}
\author[J.~Srba]{Ji\v{r}\'{\i} Srba}
\address{                                                           
Department of Computer Science, 
Aalborg University, Selma Lagerl\"{o}fs Vej 300, 9220 Aalborg East, Denmark} 
\email{srba@cs.aau.dk}
\thanks{The author is supported
in part by Institute for Theoretical Computer Science, project No.~1M0545.}

\keywords{visibly pushdown automata, bisimilarity checking, regularity,
 mu-calculus}
\subjclass{F.4.3}
\titlecomment{{\lsuper*}This is a full and extended version of a paper published
at CSL'06.}

\begin{abstract}
We study (bi)simulation-like
preorder/equivalence checking on visibly pushdown automata,
visibly BPA (Basic Process Algebra) and
visibly one-counter automata. We describe generic methods 
for proving complexity upper and lower bounds for a number
of studied preorders and equivalences like 
simulation, completed simulation, ready simulation,
2-nested simulation preorders/equivalences and bisimulation
equivalence. 
Our main results are that all the mentioned equivalences
and preorders are EXPTIME-complete on visibly pushdown automata,
PSPACE-complete on visibly one-counter automata and P-complete on
visibly BPA. 
Our PSPACE lower bound for visibly one-counter automata
improves also the previously known DP-hardness results
for ordinary one-counter automata and one-counter nets.
Finally, we study regularity checking problems 
for visibly pushdown automata and show that they can be decided
in polynomial time.
\end{abstract}

\maketitle

\section{Introduction}
Visibly pushdown languages were introduced by Alur and Madhusudan
in~\cite{AM:vPDA:04} as a subclass of context-free languages
suitable for formal program analysis, yet tractable and with nice 
closure properties like the class of regular languages. 
Visibly pushdown languages are recognized by visibly pushdown automata
whose stack behaviour is determined by the input symbol. 
If the symbol belongs to the category of \emph{call actions} then
the automaton must push, if it belongs to \emph{return actions} then
the automaton must pop, otherwise (for the \emph{internal actions})
it may not change the stack height. In~\cite{AM:vPDA:04}
it is shown that the class of visibly pushdown languages is closed
under intersection, union, complementation,
renaming, concatenation and Kleene star. A number of decision problems like
universality, language equivalence and language inclusion,
which are undecidable for context-free languages, become  
EXPTIME-complete for visibly pushdown languages.

Recently, visibly pushdown languages have been
intensively studied and applied to e.g.
program analysis~\cite{AEM:04}, XML processing~\cite{Pitcher:05}
and the language theory of this class has been further 
investigated in~\cite{AKMV:06,BLS:06}.
Researches also studied visibly pushdown games~\cite{LMS:FSTTCS:04}. 
Some recent results show, for example, the application of a variant of 
visibly pushdown automata  
for proving decidability of contextual equivalence (and other problems)
for the third-order fragment of Idealized Algol~\cite{MW:05}.
Several strict extensions of visibly pushdown automata, which still preserve
some of their pleasing language properties, have been introduced 
in~\cite{Caucal:DLT:06, FisPnu:01, NS:MFCS:07}. Note that the extension 
introduced in~\cite{FisPnu:01} does not use the terminology of
visibly pushdown automata, while it still employs similar ideas.

In this article we study visibly pushdown automata from
a different perspective. Rather than as language acceptors, we consider
visibly pushdown automata as devices 
that generate infinite state labelled graphs
and we study the questions of decidability of behavioral equivalences
and preorders on this class. Such questions were previously intensively
studied on different classes of infinite state systems, motivated by the
\emph{equivalence checking}
approach where a given implementation and specification
of a system are compared with respect to a suitable notion of behavioural
equivalence
or preorder. For example, in the class of transition systems generated by
ordinary pushdown automata, strong bisimilarity is known to be 
decidable~\cite{PDA:bisimilarity} 
(see also~\cite{Senizergues:TCS2001,  Senizergues:TCS2001_simple, 
Stirling:TCS2001}), but 
no reasonable complexity upper bound is presently known and e.g. 
the simulation
preorder/equivalence on the same class is already 
undecidable~\cite{GH:BPA-other}.
Our main motivation was to investigate whether the picture changes
if we restrict the studied class of systems to visibly pushdown automata,
which are still interesting from the modelling point of view but might
provide more satisfactory decidability/complexity results. Indeed,
our findings confirm the decidability and more reasonable complexity
bounds 
for a number of verification problems on visibly pushdown automata
and their natural subclasses.

We prove EXPTIME-containment of equivalence checking
on visibly pushdown automata (vPDA) for practically all
preorders and equivalences between simulation preorder and
bisimulation equivalence that have been studied in the literature (our focus 
includes simulation, completed simulation, ready simulation,
2-nested simulation and bisimulation). We then study two
natural (and incomparable) subclasses of visibly pushdown
automata: visibly basic process algebra (vBPA) and
visibly one-counter automata (v1CA). In case of
v1CA we demonstrate PSPACE-containment of
the preorder/equivalence checking problems and in case of
vBPA even P-containment. For vBPA we provide also
a direct reduction of the studied problems to equivalence checking
on finite state systems, hence the fast algorithms already
developed for systems with finitely many reachable states
can be directly reused. All the mentioned upper bounds
are matched by the corresponding lower bounds. The PSPACE-hardness proof
for v1CA moreover improves the currently
known DP lower bounds~\cite{JKMS:04} for equivalence checking problems
on ordinary one-counter automata and one-counter nets and some
other problems (see Remark~\ref{rem:1C}).
Finally, we consider the regularity checking problem 
for visibly pushdown automata
and show its P-completeness for vPDA and vBPA, and NL-completeness for v1CA
w.r.t. all equivalences between trace equivalence and bisimilarity.

\emph{Related work.} The main reason why many problems
for visibly pushdown languages become tractable is, as observed
in~\cite{AM:vPDA:04}, that a pair of visibly pushdown automata can be 
synchronized in a similar fashion as finite state automata. We use this
idea to construct, for a given pair of vPDA processes, a single
pushdown automaton where we in a particular way encode the behaviour
of both input processes so that they can alternate in performing
their moves. This is done in such a way that the question of
equality of the input processes w.r.t. a given equivalence/preorder
can be tested by asking about the validity of particular 
modal $\mu$-calculus formulae on the single pushdown process.
A similar result of reducing weak simulation between a pushdown
process and a finite state process (and vice versa) to the model
checking problem appeared in~\cite{KucMay:MFCS:02}. We generalize these
ideas to cover equivalences/preorders between two visibly
pushdown processes and
provide a generic proof for all the equivalence checking
problems. The technical details of our construction are different
from~\cite{KucMay:MFCS:02} and, in particular, our construction
works immediately also for vBPA (as the necessary bookkeeping is
stored in the stack alphabet). As a result we thus show how 
to handle essentially any so far 
studied equivalence/preorder between simulation and bisimulation
in a uniform way for vPDA, vBPA as well as for v1CA.

Regularity problems for deterministic pushdown
automata were studied in~\cite{Stearns:67:IC} and \cite{Valiant:75:JACM}
where a double-exponential algorithm for deciding regularity w.r.t.
language equivalence is given. This decidability result holds also for our
class of visibly pushdown automata (because it is determinizable)
but in our particular 
case we improve the result in two ways: (i) we provide a general
algorithm for regularity checking w.r.t \emph{any} equivalence between trace 
equivalence and bisimilarity, 
and (ii) our algorithm is running in 
polynomial time (and does not require determinization, which is
an expensive operation).

In~\cite{BLS:06} the authors consider language regularity problems for visibly
pushdown automata. 
In particular, they study the question whether a visibly pushdown automaton
is language equivalent to a visibly counter automaton with a given threshold.
In our work we study the regularity problems in the context of the standard
definitions from the concurrency theory, i.e., whether for a given
vPDA process there is a behaviorally equivalent finite state system,
and we consider a wide range of behavioural equivalences, not only the
language equivalence.

The plan of the article is as follows. We introduce the class of
visibly pushdown automata, a range  of behavioral equivalences,
and equivalence checking problems in Section~\ref{sec:definitions}.
The decidability and complexity
of equivalence checking of visibly pushdown automata
and their subclasses is studied in Section~\ref{sec:equivalences}.
Regularity checking problems for the considered classes are
introduced and proved decidable in polynomial time in 
Section~\ref{sec:regularity}. A summary of the results and a
further discussion is presented in Section~\ref{sec:conclusion}.

\section{Definitions} \label{sec:definitions}

A {\em labelled transition system} (LTS) is a triple $(S,\act,\goes{})$
where $S$ is the set of {\em states} (or {\em processes}),
$\act$ is the set of {\em labels} (or {\em actions}), and
$\goes{}\subseteq S\times\act\times S$ \ is the {\em transition
relation}; for each $a\in\act$, we view $\goes{a}$ as a binary relation
on $S$ where
$s \goes{a} s'$ iff $ (s, a, s') \in \goes{}$.
The notation can be naturally extended to $s \goes{w} s'$
for finite sequences of actions $w \in \act^*$. For a process $s \in S$
we define the set of its \emph{initial actions} by $I(s) \defin \{ a \in \act
\mid \exists s' \in S.\ s \goes{a} s' \}$.

We shall now define the studied equivalences/preorders which are between
simulation and bisimilarity. A complete picture of Glabbeek's
linear/branching time hierarchy (spectrum) of behavioral equivalences
is available in~\cite{Glabbeek:PhD,Glabbeek:hierarchy}.
Given $(S,\act,\goes{})$, a binary relation $R \subseteq S\times S$ is a
\begin{enumerate}[$\bullet$]
\item 
\emph{simulation} iff for each $(s,t) \in R$,
$a \in \act$, and $s'$ such that $s \goes{a} s'$
there is $t'$ such that $t \goes{a} t'$ and
$(s',t') \in R$,
\item
\emph{completed simulation} iff $R$ is a simulation
and moreover for each $(s,t) \in R$ it holds that
$I(s) = \emptyset$ if and only if $I(t) = \emptyset$,
\item
\emph{ready simulation} iff $R$ is a simulation
and moreover for each $(s,t) \in R$ it holds that
$I(s) =I(t)$,
\item
\emph{2-nested simulation} iff 
$R$ is a simulation and there is some simulation relation
$R'$ such that $R^{-1} \subseteq R'$, and
\item
\emph{bisimulation} iff $R$ is a simulation
and moreover $R^{-1} = R$.
\end{enumerate}

\begin{figure}[t] 
\begin{center}
\mbox{
\xymatrix@!R@!C@R=5.4ex@C=-11ex{
\text{bisimulation} \ar[d] \\
\text{2-nested simulation} \ar[d] \\
\text{ready simulation} \ar[d] \\
\text{completed simulation} \ar[d] \\
\text{simulation}
}
}
\end{center}
\caption{Hierarchy of simulation-like preorders/equivalences}
\label{figure-spectrum}
\end{figure}
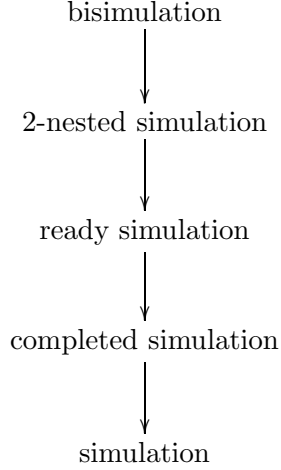

We write 
$s \simul t$ if there is a simulation $R$ such that $(s,t) \in R$, 
$s \csimul t$ if there is a completed simulation $R$ such that $(s,t) \in R$, 
$s \rsimul t$ if there is a ready simulation $R$ such that $(s,t) \in R$, 
$s \tnsimul t$ if there is a 2-nested simulation $R$ such that $(s,t) \in R$, 
$s \sim t$ if there is a bisimulation $R$ such that $(s,t) \in R$.
The relations are called the corresponding \emph{preorders} (except for
bisimilarity, which is already an equivalence). 
For a preorder 
$\sqsubseteq\  \in \{\simul,$ $\csimul,\rsimul,$ $\tnsimul\}$
we define the corresponding equivalence by
$s = t$ iff $s \sqsubseteq t$ and
$t \sqsubseteq s$. We remind the reader of the fact that
$\sim \ \subseteq \ \tnsimul \ \subseteq \ \rsimul \ \subseteq \ \csimul \ \subseteq \
\simul$ and
$\sim \ \subseteq \ \tnsimuleq \ \subseteq \ \rsimuleq \ \subseteq \ \csimuleq \
\subseteq \ \simuleq$ and all inclusions are strict. The hierarchy 
is depicted in Figure~\ref{figure-spectrum}.

We shall use a standard game-theoretic characterization of
(bi)similarity~\cite{Thomas1993TAPSOFT,Stirling:CONCUR95}.
A \emph{bisimulation game} on a pair of processes
$(s_1,t_1)$
is a two-player game between \emph{Attacker} and \emph{Defender}.
The game is played in \emph{rounds}
on pairs of states from $S\times S$.
In each round 
the players change the \emph{current pair of states} $(s,t)$
(initially $s=s_1$ and $t=t_1$)
according to the following rule:
\begin{enumerate}[(1)]
\item
Attacker chooses either $s$ or $t$,
$a \in \act$ and performs a move
$s \goes{a} s'$ or $t \goes{a} t'$.
\item
Defender responds by choosing the opposite
process (either $t$ or $s$) and performs
a move $t \goes{a} t'$ or $s \goes{a} s'$ under the same action $a$.
\item
The pair $(s',t')$ becomes the (new) current pair of states.
\end{enumerate}
A \emph{play} (of the bisimulation game) is a sequence
of pairs of processes formed by the players according to
the rules mentioned above.
A play is finite iff one of the players gets stuck (cannot make
a move); the player who got stuck lost the play and the other player is
the winner.  If the play is infinite then Defender is the winner.

We use the following standard fact.
\begin{prop}
It holds that $s \sim t$ iff Defender has a
winning strategy in the bisimulation game starting with the pair
$(s,t)$, and
$s \not\sim t$ iff Attacker has a
winning strategy in the corresponding game.
\end{prop}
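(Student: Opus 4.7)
The plan is to prove the two directions separately and in the process also argue that the game is determined (so the ``iff'' in the second clause follows immediately).

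For the forward direction, suppose $R$ is a bisimulation with $(s,t)\in R$. The Defender's strategy is simply to preserve the invariant that after every completed round the current pair lies in $R$. If Attacker plays $s\goes{a}s'$, then because $R$ is a simulation there exists $t'$ with $t\goes{a}t'$ and $(s',t')\in R$, and Defender answers by choosing such a $t'$; if Attacker plays $t\goes{a}t'$, the symmetric condition $R^{-1}=R$ supplies the analogous $s'$. Under this strategy Defender is never stuck, so either the play is infinite (Defender wins by definition) or Attacker becomes stuck (Defender wins). Hence $s\sim t$ implies Defender has a winning strategy from $(s,t)$.

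For the converse, let $R=\{(s,t)\mid\text{Defender has a winning strategy in the game starting from }(s,t)\}$; I will verify that $R$ is a bisimulation. Given $(s,t)\in R$ and a transition $s\goes{a}s'$, consider the Attacker move that picks exactly this transition. Defender's winning strategy must produce a response $t\goes{a}t'$ (otherwise Defender would already be stuck and thus lose); moreover, the winning strategy restricted to continuations from $(s',t')$ is still winning, so $(s',t')\in R$. The same argument applied to transitions $t\goes{a}t'$ gives the symmetric clause, so $R$ is indeed a bisimulation containing $(s,t)$, i.e.\ $s\sim t$.

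Finally, the second half of the proposition ($s\not\sim t$ iff Attacker wins) is equivalent to determinacy of the game. This can be deduced from the two directions above: if Defender does \emph{not} have a winning strategy from $(s,t)$, then $(s,t)\notin R$, so by the contrapositive of the first direction $s\not\sim t$; and conversely if Attacker has a winning strategy then Defender certainly cannot also have one (as any play is won by exactly one player), so by the first direction $s\not\sim t$. Determinacy itself is the only mildly delicate point: it holds because bisimulation games are reachability/safety games on a countable arena with $\omega$-perfect information, which are classically determined; alternatively one can observe directly that ``Defender has no winning strategy'' unfolds into a positional winning condition for Attacker (at every reachable pair, some Attacker move leaves no Defender response in $R$), from which an explicit Attacker strategy is read off. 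This determinacy step is the only non-routine ingredient and the natural place to cite the references~\cite{Thomas1993TAPSOFT,Stirling:CONCUR95}.
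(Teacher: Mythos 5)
The paper never proves this proposition: it is introduced with ``We use the following standard fact'' and the game-theoretic characterization is simply credited to \cite{Thomas1993TAPSOFT,Stirling:CONCUR95}, so there is no in-paper argument to compare against. Your writeup is the standard proof, and its skeleton is sound: the invariant-maintaining Defender strategy for the forward direction, the verification that the Defender-winning pairs form a bisimulation for the converse, and the correct isolation of determinacy as the only non-routine ingredient behind the second clause. One small mismatch with the paper's definitions: here a bisimulation is a simulation $R$ with $R^{-1}=R$, whereas you verify that your $R$ and $R^{-1}$ are each simulations (the usual two-clause definition) without checking symmetry. The fix is one line: the game rules are symmetric in the two coordinates, so mirroring a winning strategy under the swap $(s,t)\mapsto(t,s)$ gives $R=R^{-1}$; alternatively replace $R$ by $R\cup R^{-1}$, which is again a simulation.

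The one genuine soft spot is your ``alternative'' direct determinacy sketch. It is true that from any pair where Defender has no winning strategy there is an Attacker move all of whose Defender responses again leave Defender without a winning strategy (else Defender could survive one round into a winning position). But following this positional prescription only keeps the play outside the Defender-winning region; the play may nevertheless be infinite, and infinite plays are won by \emph{Defender}. So no winning Attacker strategy is ``read off'' this way as stated. The standard repair is to stratify non-bisimilarity by ordinal approximants $\sim_\alpha$ (equivalently, build Attacker's winning region as a transfinite attractor to the positions where Defender is stuck) and have Attacker always play a move that strictly decreases the rank; well-foundedness then forces the play to be finite. Your primary appeal --- classical determinacy of closed (safety/reachability) games in the Gale--Stewart sense --- is correct and already suffices, and it needs no countability hypothesis (the paper places no cardinality restriction on $S$), so the proposition stands on that citation; just drop or repair the parenthetical sketch.
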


The rules of the bisimulation game can be easily modified in order
to capture the other equivalences/preorders. 

\begin{enumerate}[$\bullet$]
\item In the \emph{simulation preorder game}, Attacker is restricted
to attack only from the (left-hand side) process $s$. In
the \emph{simulation equivalence game}, Attacker can first choose
a side (either $s$ or $t$) but after that he is not allowed to
change the side any more. 
\item
\emph{Completed/ready simulation game}
has the same rules as the simulation game but Defender is
moreover losing in any configuration which breaks the extra
condition imposed by the definition (i.e. $s$ and $t$ should
have the same set of initial actions in case of ready simulation,
and their sets of initial actions should be both empty at the
same time in case of completed simulation). 
\item 
In the
\emph{2-nested simulation preorder game}, Attacker starts playing
from the left-hand side process $s$ and at most once during the play
he is allowed to switch sides (the soundness follows from the characterization
provided in~\cite{AFI:01}). In the \emph{2-nested simulation equivalence game},
Attacker can initially choose any side but he is still restricted that he
can change sides at most once during the play.
\end{enumerate}

We shall now define the model of pushdown automata.
Let $\act$ be a finite set of actions, let
$\Gamma$ be a finite set of stack symbols and let
$Q$ be a finite set of control states. We assume that the sets
$\act$, $\Gamma$ and $Q$ are pairwise disjoint.
A \emph{pushdown automaton} (PDA) over the set of actions $\act$,
stack alphabet $\Gamma$ and control states $Q$ 
is a finite set $\Delta$ of rules of the form
$pX \goes{a} q\alpha$
where $p, q \in Q$, $a \in \act$, $X \in \Gamma$ and
$\alpha \in \Gamma^*$.

A PDA $\Delta$ determines a labelled transition system
$T(\Delta) = (S, \act, \goes{})$ where the states are configurations of
the form state$\times$stack
(i.e. $S = Q\times\Gamma^*$ and configurations like $(p,\alpha)$ are
usually written as $p\alpha$ where the top of the stack $\alpha$
is by agreement on the left)
and the transition relation is determined by the following
prefix rewriting rule.

$$ \pfrule{(pX \goes{a} q\alpha) \in \Delta, \ \ \gamma \in \Gamma^*}
{pX\gamma \goes{a} q\alpha\gamma}$$

In what follows we shall often call the configurations $pX$ as 
(pushdown) \emph{processes} and omit the reference to the
corresponding pushdown automaton and its underlying labelled
transition system whenever it is clear from the context.

A pushdown automaton is called \emph{single-state}
if the set of its control states is a singleton set ($|Q|=1$). In this case
we usually omit the control state from the rules and configurations and
we call such automata
as BPA for \emph{Basic Process Algebra}, which is another standard
terminology.

A pushdown automaton is called 1CA for \emph{one-counter automaton} if the stack
alphabet consists of two symbols only, $\Gamma = \{I,Z\}$, and every
rule is of the form $pI \goes{a} q \alpha$ or
$pZ \goes{a} q \alpha Z$, where $\alpha \in \{I\}^*$.
This means that every configuration reachable from $pZ$ is of the form
$pI^nZ$ where $I^n$ stands for a sequence of $n$ symbols $I$ and $Z$ 
corresponds to
the bottom of the stack (the value zero). We shall simply denote 
such a configuration by $p(n)$ and say that it represents the counter value $n$.

Assume that $\act = \actc \cup \actr \cup \acti$ is partitioned into a
disjoint union of finite sets of call, return and internal actions, respectively.
A \emph{visibly pushdown automaton} (vPDA) is a PDA which
satisfies additional three requirements 
for any rule $pX \goes{a} q\alpha$ 
(where $|\alpha|$ stands
for the length of $\alpha$):
\begin{enumerate}[$\bullet$]
\item if $a \in \actc$ then $|\alpha|=2$ (call),
\item if $a \in \actr$ then $|\alpha|=0$ (return), and
\item if $a \in \acti$ then $|\alpha|=1$ (internal).
\end{enumerate}
Hence in vPDA the type of the input action determines the change in the height
of the stack (call by $+1$, return by $-1$, internal by $0$).  

Visibly basic process algebra (vBPA) and
visibly one-counter automata (v1CA) are defined analogously.

\begin{rem}
For internal actions we allow to modify also the top of the
stack. This model (for vPDA) can be easily seen to be
equivalent to the standard one
(as introduced in~\cite{AM:vPDA:04}) where
the top of the stack does not change under internal actions.
However, when we consider the subclass vBPA, the possibility of 
changing the top of
the stack under internal actions increases the
descriptive power of the formalism. Unlike in~\cite{AM:vPDA:04},
we do not allow to perform return actions on the empty stack.
This mild restriction is essential for our results on regularity
checking in Section~\ref{sec:regularity}
but the results about equivalence checking in 
Section~\ref{sec:equivalences} are valid even without this restriction.
\end{rem}

\begin{exa} Consider the vPDA rules
$pX \goes{a} pXY$, $pX \goes{b} p \epsilon$, $pY \goes{c} p\epsilon$
where $a \in \actc$ and $b,c \in \actr$.
The transition system generated by the root $pX$ looks as follows. \\
\begin{center}
\mbox{
\xymatrix@!R@!C@R=16ex@C=9ex{
pX \ar[r]^{a} \ar[d]^b & 
pXY \ar[r]^{a} \ar[d]^b &
pXYY \ar[r]^{a} \ar[d]^b  &
pXYYY \ar[r]^{a} \ar[d]^b  & 
\ldots \\
p\epsilon &
pY \ar[l]_c &
pYY \ar[l]_c &
pYYY \ar[l]_c &  \ar[l]_c \ldots
}}
\end{center}

The vPDA process $pX$ (which is in fact also a vBPA process) 
generates an infinite state transition system, which is not
trace equivalent (and hence also not bisimilar) 
to any finite state system. Hence the class of visibly
pushdown processes and visibly BPA processes
strictly contains all finite state processes.
\end{exa}

The question we are interested in is: given a vPDA 
and two of its initial processes $pX$ and $qY$, can we
algorithmically decide whether $pX$ and $qY$ are
related with respect to a given preorder/equivalence
and if yes, what is the complexity?
Similar questions can be asked for vBPA and v1CA.

\begin{rem}
Note that the problem of equivalence checking of two processes belonging
to different visibly pushdown automata (under the same partitioning
of actions) is also covered by the
definition of the problem above. We can always consider 
only a single vPDA by making a disjoint union of the respective 
pushdown automata. 
\end{rem}


\section{Decidability of Preorder/Equivalence Checking} \label{sec:equivalences}

\subsection{Visibly Pushdown Automata}

We shall now study preorder/equivalence checking problems
on visibly pushdown automata. We prove 
their decidability by 
reducing the problems to model checking of an ordinary pushdown system against
a $\mu$-calculus formula.

Let $\Delta$ be a vPDA over the set of actions $\act = \actc \cup \actr 
\cup \acti$, stack alphabet $\Gamma$ and control states $Q$. We shall
construct a PDA $\Delta'$ over the actions
$\act'\defin \act \cup \overline{\act} \cup \{\ell,r\}$
where $\overline{\act} \defin \{ \overline{a} \mid a \in \act\}$, stack alphabet
$\Gamma' \defin G \times G$ where
$G \defin \Gamma \cup (\Gamma\times\Gamma) \cup 
(\Gamma\times\act) \cup \{\epsilon\}$, and
control states $Q' \defin Q \times Q$. 
For notational convenience, 
elements $(X,a) \in \Gamma\times\act$ will be written simply 
as $X_a$.

The idea is that for a given pair of vPDA processes
we shall construct a single PDA process which simulates 
the behaviour of both vPDA processes by repeatedly performing
a move in one of the processes, immediately followed by
a move under the same action in the other process.
The actions $\ell$ and $r$ make it visible, whether the
move is performed on the left-hand side or right-hand side.
The assumption that the given processes are vPDA 
ensures that their stacks are kept synchronized.

We shall define a partial mapping $[\ .\ ,\ . \ ]: \Gamma^*\times\Gamma^* \rightarrow
(\Gamma\times\Gamma)^*$
inductively as follows ($X,Y \in \Gamma$
and $\alpha,\beta \in \Gamma^*$ such that $|\alpha|=|\beta|$):
\begin{center}
\begin{tabular}{lcl}
$[X\alpha,Y\beta]$ & $\defin$ & $(X,Y)[\alpha,\beta]$ \\
$[\epsilon,\epsilon]$ & $\defin$ & $\epsilon$ \ .
\end{tabular}
\end{center}
The mapping provides the possibility to merge stacks.

Assume a given pair of vPDA processes $pX$ and $qY$.
Our aim is to effectively construct a new PDA system $\Delta'$ such that
for every $\bowtie\ \in \{\simul, \simuleq, \csimul, \csimuleq, \rsimul, 
\rsimuleq, \tnsimul, \tnsimuleq, \sim\}$ it is the case that
$pX \bowtie qY$ 
in $\Delta$ if and only if
$(p,q)(X,Y) \models \phi_{\bowtie}$ 
in $\Delta'$ for a fixed $\mu$-calculus formula $\phi_{\bowtie}$.
We refer the reader to~\cite{Kozen:mu-calculus}
for the introduction to the modal $\mu$-calculus.

The set of PDA rules $\Delta'$ is defined as follows.
Whenever $(pX \goes{a} q\alpha) \in \Delta$ then 
the following rules belong to $\Delta'$:
\begin{enumerate}[(1)]
\item 
$(p,p')(X,X') \goes{\ell} (q,p')(\alpha,X'_a)$
for every $p' \in Q$ and $X' \in \Gamma$,
\item 
$(p',p)(X',X) \goes{r} (p',q)(X'_a,\alpha)$ 
for every $p' \in Q$ and $X' \in \Gamma$,
\item $(p',p)(\beta,X_a) \goes{r} (p',q)[\beta,\alpha]$
for every $p' \in Q$ and $\beta \in \Gamma \cup 
(\Gamma\times\Gamma) \cup \{\epsilon\}$,
\item $(p,p')(X_a,\beta) \goes{\ell} (q,p')[\alpha,\beta]$
for every $p' \in Q$ and $\beta \in \Gamma \cup 
(\Gamma\times\Gamma) \cup \{\epsilon\}$,
\item $(p,p')(X,X') \goes{a} (p,p')(X,X')$
for every $p' \in Q$ and $X' \in \Gamma$, and
\item $(p',p)(X',X) \goes{\overline{a}} (p',p)(X',X)$
for every $p' \in Q$ and $X' \in \Gamma$.
\end{enumerate}

From a configuration $(p,q)[\alpha,\beta]$
the rules of the form 1. and 2. select either the left-hand
or right-hand side and perform some transition in the selected
process. The next possible transition (by rules 3. and 4.)
is only from the opposite side
of the configuration than in the previous step.
Symbols of the form $X_a$ where $X \in \Gamma$ and $a \in \act$
are used to make sure that the transitions in these two steps
are due to pushdown rules under the same label $a$. 
Note that in the rules 3. and 4. it is thus 
guaranteed that $|\alpha|= |\beta|$.
Finally, the rules 5. and 6. introduce a number of self-loops
in order to make visible the initial actions of the processes.
Actions currently available in the left-hand side process
are visible as the actions from the set $\act$, while the
actions in the right-hand side process are made visible as the
actions from the set $\overline{\act}$.

\begin{exa}
Consider the vPDA rules: $pX \goes{a} qXY$, $rY \goes{a} sYY$, $rY \goes{b} r$
where $a \in \actc$ and $b \in \actr$. The transition system
generated (in $\Delta'$) by the root $(p,r)(X,Y)$ looks as follows.
\begin{center}
\mbox{ \xymatrix@R=10ex@C=-4ex{
& (p,r)(X,Y) \ar[dl]_\ell \ar[dr]^r \ar[drrr]^r  
\ar@(ru,u)[]_{a, \overline{a}, \overline{b}}
\\
(q,r)(XY,Y_a) \ar[dr]^r & & (p,s)(X_a,YY) \ar[dl]_\ell & 
\ \ \ \ \ \ \ \ \ \ \ \ \ \ \ \ \ \ \ \ \ \ \ \ \ \ \  & 
(p,r)(X_b,\epsilon) \\
& (q,s)(X,Y)(Y,Y)
}}
\end{center}
\mbox{ }\\[3mm]
Note that the configuration $(p,r)(X_b,\epsilon)$ is stuck because
there is no $b$-labelled transition from the state $pX$.
\end{exa}

\begin{lem} \label{lem:mu}
Let $\Delta$ be a vPDA system over the set of actions $\act$ and $pX$, $qY$
two of its processes. Let $(p,q)(X,Y)$ be
a process in the system $\Delta'$ constructed above.
Let 
\begin{enumerate}[$\bullet$]
\item $\phi_{\simul} \equiv 
\nu Z.\must{\ell}\may{r}Z$,

\item $\phi_{\simuleq} \equiv 
\phi_{\simul} 
 \wedge (\nu Z.\must{r}\may{\ell}Z)$,

\item $\phi_{\csimul} \equiv 
\nu Z.\big(\must{\ell}\may{r}Z \wedge 
(\may{\act}\tr \Leftrightarrow 
\may{\overline{\act}}\tr)\big)$,

\item $\phi_{\csimuleq} \equiv
\phi_{\csimul}
\wedge  
\nu Z.\big(\must{r}\may{\ell}Z \wedge 
(\may{\act}\tr \Leftrightarrow 
\may{\overline{\act}}\tr)\big)$,

\item $\phi_{\rsimul} \equiv 
\nu Z.\big(\must{\ell}\may{r}Z \wedge 
\bigwedge\limits_{a \in \act} (\may{a}\tr \Leftrightarrow \may{\overline{a}}\tr)\big)$,

\item $\phi_{\rsimuleq} \equiv 
\phi_{\rsimul}
\wedge
\nu Z.\big(\must{r}\may{\ell}Z \wedge 
\bigwedge\limits_{a \in \act} (\may{a}\tr \Leftrightarrow \may{\overline{a}}\tr)\big)$,

\item $\phi_{\tnsimul} \equiv 
\nu Z.\big(\must{\ell}\may{r}Z \wedge (\nu Z'.\must{r}\may{\ell}Z')\big)$,

\item $\phi_{\tnsimuleq} \equiv 
\phi_{\tnsimul}
\wedge
\nu Z.\big(\must{r}\may{\ell}Z \wedge (\nu Z'.\must{\ell}\may{r}Z')\big)$, and

\item $\phi_{\sim} \equiv
\nu Z.\must{\ell,r}\may{\ell,r}Z$.
\end{enumerate}
For every  $\bowtie\ \in \{\simul,\simuleq, \csimul, \csimuleq, \rsimul,
\rsimuleq, \tnsimul, \tnsimuleq, \sim\}$
it holds that  
$pX \bowtie qY$ if and only if $(p,q)(X,Y) \models \phi_{\bowtie}$.
\end{lem}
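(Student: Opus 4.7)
The plan is to show that the construction of $\Delta'$ faithfully encodes the game-theoretic characterizations of the various preorders and equivalences, so that each $\phi_\bowtie$ is precisely the fixpoint characterization of $\bowtie$. The central invariant is that ``clean'' configurations of the form $(p',q')[\alpha,\beta]$ with $|\alpha|=|\beta|$ are in bijection with pairs $(p'\alpha, q'\beta)$ of vPDA configurations; the visibly-pushdown discipline ensures that in any matched round of moves both sides push (call), both pop (return), or both keep the height (internal), so this equal-length invariant is preserved.

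From a clean configuration $(p',q')(X',Y')\gamma$ the only non-self-loop moves are those of rules~(1) and~(2): an $\ell$-transition picks some $p'X' \goes{a} r\alpha$ in $\Delta$ and produces the intermediate $(r,q')(\alpha, Y'_a)\gamma$ carrying the ``promise'' marker $Y'_a$, and symmetrically for $r$. From such an intermediate configuration the only non-self-loop moves are those of rules~(3)/(4): because the top of the second coordinate is now of the form $Y'_a \in \Gamma\times\act$, rules~(1), (2), (5), (6) are disabled on that side, and rules~(3)/(4) fire only with a matching pushdown rule on the opposite side under the \emph{same} action $a$, returning us to a clean configuration corresponding to the intended successor pair in $\Delta$. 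Thus one round of the bisimulation/simulation game on $(p\alpha, q\beta)$ in $\Delta$ corresponds exactly to two moves in $\Delta'$: an $\ell$ (or $r$) chosen by Attacker followed by an $r$ (or $\ell$) forced to match the action. If no matching move exists, the player to move in $\Delta'$ is stuck, matching the losing condition in the game.

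With this correspondence in hand each formula can be verified by unfolding its greatest fixpoint. For $\phi_{\simul} = \nu Z.\must{\ell}\may{r}Z$, the modalities $\must{\ell}\may{r}$ express exactly ``for every Attacker attack on the left, Defender has a response on the right leading to a pair again in the relation'', which is Park's fixpoint characterization of $\simul$. The conjunct $\may{\act}\tr \Leftrightarrow \may{\overline{\act}}\tr$ in $\phi_{\csimul}$ exploits rules~(5)/(6): at any clean configuration the self-loops labelled $a \in \act$ are enabled iff $pX$ has some outgoing transition, and the loops labelled $\overline{a}$ iff $qY$ does, so the biimplication is the completed-simulation side condition; replacing $\act$ by individual $a$ in $\phi_{\rsimul}$ gives the ready condition $I(s)=I(t)$. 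The equivalence variants just add the symmetric conjunct allowing Attacker to attack from the right, and the 2-nested simulation formulas correctly model the ``at most one side-switch'' rule by nesting a single $\nu Z'.\must{r}\may{\ell}Z'$ (the switched game) inside the outer fixpoint. Finally $\phi_{\sim}=\nu Z.\must{\ell,r}\may{\ell,r}Z$ lets Attacker attack on either side each round, which is exactly bisimilarity.

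The main obstacle is the bookkeeping required to rule out ``cheating'' transitions in $\Delta'$: one must check carefully, by case analysis on the shape of the stack top (one of $(X,Y)$, $(X_a,Y)$, $(X,Y_a)$, and the $(\Gamma\times\Gamma)$ variants introduced when $a$ is a call), that (i)~rules~(5)/(6) are enabled only at clean configurations so the initial-action tests in $\phi_{\csimul}, \phi_{\rsimul}$ are applied at the right moment, (ii)~the $\ell/r$ alternation is strictly enforced between clean positions, and (iii)~stack synchronization is maintained under the visibly-pushdown rules so that $[\alpha,\beta]$ is always well-defined when invoked. The soundness and completeness of the reduction then follow by a standard induction on the length of plays / on the approximants of the greatest fixpoint, using the correspondence above to translate Defender's winning strategy into a witness for $\phi_\bowtie$ and vice versa.
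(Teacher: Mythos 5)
Your proposal is correct and follows essentially the same route as the paper: the paper also establishes the two-moves-per-round correspondence between clean configurations $(p,q)[\alpha,\beta]$ and game positions $(p\alpha,q\beta)$, and proves the fixpoint claim concretely only for bisimulation (exhibiting the set $F \defin \{ (p,q)[\alpha,\beta] \mid p\alpha \sim q\beta \wedge |\alpha|=|\beta|\}$ as a post-fixed point for one direction and showing the induced relation $R$ is a bisimulation for the other), declaring the remaining cases similar. Your bookkeeping checks (markers $X_a$ disabling rules (1), (2), (5), (6), enforced $\ell/r$ alternation, visibility keeping $[\cdot,\cdot]$ well-defined, self-loops (5)/(6) realizing the initial-action tests) are exactly the points the paper's argument relies on.
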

\begin{proof}
We shall argue only for the case of bisimulation. Proofs for the other cases are
similar. The formula $\phi_{\sim}$ 
requires that for every action $\ell$ or $r$ there must follow
at least one of the actions $\ell$ or $r$ such that the same property
holds again. Note that (due to the construction of $\Delta'$),
if the first action
was $\ell$ then necessarily the second one must be
$r$ and vice versa. Finally, $\phi_{\sim}$ is chosen as the
greatest fixed point. The reason is that
it should satisfy also infinite
runs (which have the above mentioned property) as they are
winning for Defender. In what follows we describe formally
the above mentioned intuition. A reader familiar with
recursive formulae of $\mu$-calculus may skip the rest of the proof.

``$\Rightarrow$'': We show that
$p\alpha \sim q\beta$ implies that $(p,q)[\alpha,\beta] \models
\nu Z.\must{\ell,r}\may{\ell,r}Z$. We prove that
the set $F \defin \{ (p,q)[\alpha,\beta] \mid p\alpha \sim q\beta 
\wedge |\alpha|=|\beta| \}$
is a fixed point of the function corresponding to our formula. 
This amounts to checking (without loss of generality as the
other case is symmetric) that for every 
$(p,q)[\alpha,\beta] \in F$ and for every
$\ell$-move
$(p,q)[\alpha,\beta] \goes{\ell} c$ for some configuration
$c$ of $\Delta'$,
there is an $r$-move 
$c \goes{r} (p',q')[\alpha',\beta']$ such that
$(p',q')[\alpha',\beta'] \in F$. As $\nu Z.\must{\ell,r}\may{\ell,r}Z$
is the greatest fixed point, this will establish our claim.
We remind the reader of the fact that there are no $\ell$-transitions
enabled from the configuration $c$.

Let $(p,q)[\alpha,\beta] \in F$ and let
$(p,q)[\alpha,\beta] \goes{\ell} (p',q)(\alpha',Y_a)[\gamma,\delta]$
due to some rule $(pX \goes{a} p'\alpha')\in \Delta$ where
$\alpha=X\gamma$ and $\beta = Y\delta$. This means
that $p\alpha = p X\gamma \goes{a} p'\alpha'\gamma$ and
because $p\alpha \sim q\beta$ we have that
$q\beta=qY\delta \goes{a} q'\beta'\delta$ due to some
rule $(qY \goes{a} q'\beta')\in\Delta$ such that
$p'\alpha'\gamma \sim q'\beta'\delta$.
Hence
$(p',q)(\alpha',Y_a)[\gamma,\delta] \goes{r}
(p',q')[\alpha',\beta'][\gamma,\delta] =
(p',q')[\alpha'\gamma,\beta'\delta]$
and $(p',q')[\alpha'\gamma,\beta'\delta] \in F$ as required.

``$\Leftarrow$'':
We prove that
$(p,q)[\alpha,\beta] \models \nu Z.\must{\ell,r}\may{\ell,r}Z$ implies
that $p\alpha \sim q\beta$. To do so we define a binary relation
$R \defin \{ (p\alpha,q\beta) \mid 
(p,q)[\alpha,\beta] \models \nu Z.\must{\ell,r}\may{\ell,r}Z
\wedge |\alpha|=|\beta| \}$ and
show that $R$ is a bisimulation.
Let $(p\alpha,q\beta) \in R$ and let us without loss of generality 
(the other case is completely symmetric) assume that
$p\alpha \goes{a} p'\alpha'$ due to some rule
$(pX \goes{a} p'\alpha'') \in \Delta$. Hence $\alpha = X\gamma$
and $\alpha' = \alpha''\gamma$ for some $\gamma \in \Gamma^*$.
We aim to show that also $q\beta \goes{a} q'\beta'$ such
that $(p',q')[\alpha',\beta'] \models \nu Z.\must{\ell,r}\may{\ell,r}Z$
and thus $(p'\alpha',q'\beta') \in R$.
The fact that $p\alpha \goes{a} p'\alpha'$ means that in 
$\Delta'$ we have a transition
$(p,q)[\alpha,\beta] = (p,q)[X\gamma,\beta] \goes{\ell}
(p',q)(\alpha'',Y_a)[\gamma,\beta']$ where $\beta =Y\beta'$.
As $(p,q)[\alpha,\beta]$ satisfies the formula
$\nu Z.\must{\ell,r}\may{\ell,r}Z$ (and due to the unfolding
law also the formula 
$\must{\ell,r}\may{\ell,r}(\nu Z.\must{\ell,r}\may{\ell,r}Z)$),
we have that
$(p',q)(\alpha'',Y_a)[\gamma,\beta'] \goes{r}
(p',q')[\alpha'',\beta''][\gamma,\beta']$ due to some
rule $(qY \goes{a} q'\beta'') \in \Delta$ such that
$(p',q')[\alpha'',\beta''][\gamma,\beta'] \models
\nu Z.\must{\ell,r}\may{\ell,r}Z$. Note that from
$(p',q)(\alpha'',Y_a)[\gamma,\beta']$ no $r$-action
is enabled on the left-hand side
and there is no $\ell$-action available, so this is indeed 
the only possibility.
This, however, means that $(p',q')[\alpha'',\beta''][\gamma,\beta'] =
(p',q')[\alpha''\gamma,\beta''\beta'] =
(p',q')[\alpha',\beta''\beta']$, and 
$q\beta \goes{a} q'\beta''\beta'$, which implies
that $(p'\alpha',q'\beta''\beta') \in R$.
\end{proof}

\begin{thm} \label{thm:EXP-vPDA}
Simulation, completed simulation, ready simulation and
2-nested simulation  preorders and equivalences, as well as
bisimulation equivalence are decidable on vPDA and all these
problems are EXPTIME-complete.
\end{thm}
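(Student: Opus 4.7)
The plan is to establish the EXPTIME upper bound by reducing each preorder/equivalence problem to $\mu$-calculus model checking on an ordinary pushdown system via Lemma~\ref{lem:mu}, and to match it with an EXPTIME-hardness proof via a single reduction from a canonical EXPTIME-complete problem such as acceptance by an alternating linearly-bounded Turing machine. For the upper bound I would first verify that the construction of $\Delta'$ preceding Lemma~\ref{lem:mu} is polynomial in $|\Delta|$: the control-state set is $Q \times Q$, the stack alphabet $G \times G$ has cardinality $O((|\Gamma|^2 + |\Gamma||\act| + |\Gamma| + 1)^2)$, and the number of rules of the forms (1)--(6) is polynomial in $|\Delta|$. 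Likewise, each formula $\phi_{\bowtie}$ has size at most linear in $|\act|$ (the ready-simulation conjunction being the only ingredient that grows with the input), hence polynomial in the size of the instance. Invoking Walukiewicz's EXPTIME algorithm for $\mu$-calculus model checking on pushdown automata then yields an EXPTIME decision procedure for all nine relations simultaneously.

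For the lower bound, the plan is to encode acceptance by an alternating LBA (or equivalently, the winner problem for some succinct two-player reachability game) as a single pair of vPDA processes, arranged so that the two processes are bisimilar iff the LBA accepts, while Attacker already wins the simulation preorder game iff the LBA rejects. The natural encoding stores the tape content on the stack, using calls to record symbols written while moving right and returns to retract them while moving left, with the symbol carried in the control state entered immediately after the call. Universal TM choices are mirrored by Attacker and existential ones by Defender. Because of the strict containments $\sim \subseteq \tnsimul \subseteq \rsimul \subseteq \csimul \subseteq \simul$ (and analogously for the equivalences), a reduction that is tight at both ends propagates EXPTIME-hardness uniformly through the whole hierarchy in Figure~\ref{figure-spectrum}, dispensing with nine separate proofs.

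The main obstacle is the lower bound. The visibility discipline forces the two stacks to remain of equal height throughout the game, so every call (resp.\ return) on one side must be matched by a call (resp.\ return) on the other; this severely constrains how the tape may be encoded and how cheating attempts can be detected. The delicate part is to design the rules so that any deviation from a faithful simulation of the intended TM computation---by either player---produces a discrepancy in the available initial actions on the two sides, thereby handing Attacker a winning response already at the level of simulation preorder. Once such a reduction is in place, the upper bound is routine bookkeeping: verifying the polynomial bounds on $|\Delta'|$ and on $|\phi_{\bowtie}|$, and then appealing to Walukiewicz's theorem through Lemma~\ref{lem:mu}.
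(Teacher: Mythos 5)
Your upper bound is exactly the paper's argument: polynomial-size construction of $\Delta'$ (control states $Q\times Q$, stack alphabet $G\times G$, rules (1)--(6) polynomial in $|\Delta|$), formulae $\phi_{\bowtie}$ of size $O(|\act|)$ and alternation depth $1$, and then Walukiewicz's EXPTIME model-checking algorithm through Lemma~\ref{lem:mu}. That half is correct and needs no further comment.

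The genuine gap is in your lower bound, and it is in the encoding itself, not in the bookkeeping you flag as delicate. Storing the LBA tape on the stack --- calls recording symbols on rightward moves, returns retracting them on leftward moves --- cannot faithfully simulate an alternating LBA: when the head moves left, the contents of the cells to its right are popped and irretrievably lost, so when the head later moves right again the automaton must guess those contents, and there is nothing anywhere in either configuration against which the guess could ever be verified. This cannot be repaired by arranging initial-action discrepancies, because at the moment of a ``cheat'' the certifying information (the overwritten cell contents) is simply absent from both processes; no test available to either player, local or otherwise, can detect it. This is essentially the reason a single pushdown does not capture LBA computations. The reduction the paper actually invokes (Theorem~6 of~\cite{KucMay:MFCS:02}, full details in Theorem~3.3 of~\cite{KM:RS-2002-01}) sidesteps the problem by a guess-and-check scheme: Attacker pushes \emph{entire LBA configurations}, one after another, onto the stack, so the whole history is retained; Defender may at any point enter a checking phase in which exactly $n+4$ symbols are popped and a three-symbol window of the last configuration is compared, by counting in the finite control, against the corresponding window of the previous configuration, so that cheating is punished with a delay rather than on the spot. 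Universal (in your polarity, existential) choices handed to Defender require the Defender's Forcing gadgets~\cite{JS:08:JACM}, which your sketch omits entirely --- Defender never moves first, so giving Defender a choice is itself a construction. Your two correct observations --- that the polarity of acceptance is immaterial (EXPTIME is closed under complement) and that a reduction tight at both ends propagates hardness over the whole interval from $\simul$ to $\sim$ because Attacker plays only on the left --- are exactly the paper's remarks, and the height-synchronization that makes the construction \emph{visibly} pushdown after relabelling holds for the configuration-pushing scheme just as well. So either cite and adapt that existing construction, as the paper does, or rebuild your reduction along guess-and-check lines; the direct tape-on-stack simulation as sketched would fail.
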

\begin{proof}
EXPTIME-hardness (for all relations between simulation
preorder and bisimulation equivalence)  
follows from Theorem 6 in~\cite{KucMay:MFCS:02}.
There the authors give a reduction from the acceptance problem of
alternating linear bounded automaton (LBA) to bisimilarity
checking on pushdown automata. 
For a given alternating
LBA they construct two PDA processes that have
identical stack contents (and hence also their heights) 
during any bisimulation game played on them. These processes
are bisimilar if and only if the given LBA does not accept
the input string. The processes are constructed
in such a way that Attacker can nondeterministically push configurations
of the LBA, symbol by symbol, onto the stack of the first process.
Defender has no other choice than to mimic this behaviour 
in the second process. If
the current control state of the LBA is existential, then
it is Attacker who selects the next control state. On the other hand, if
the current control state is universal, Defender is selecting the
next control state. This part of the construction
uses the so-called \emph{Defender's Forcing 
Technique}~\cite{JS:08:JACM}, an instance
of which is also explained in the proof of Lemma~\ref{lem:pspace}.
Attacker wins the game iff a configuration containing an accept
control state is ever pushed onto the stacks. 

Of course, Attacker can also decide 
to ``cheat'' and push symbols that do not in fact represent a valid
computation of the LBA. To prevent this, a checking phase is added
to the constructed processes. Defender can at any point decide to enter
this phase and what will happen is that the players will pop exactly
$n+4$ symbols (a control state is a part of the configuration)
 from the stack of both processes, where $n$ is the length
of the input. During this phase it is verified whether the three top-most
symbols on the stack match (according to the LBA computation step)
with the corresponding three symbols in the
previous configuration.
If not, Defender wins the game,
otherwise Attacker is the winner. This means that Defender cannot gain
anything by entering the checking phase as long as Attacker follows
a valid computation of the LBA, but also that Attacker cannot gain anything
by ``cheating'' as this can be immediately punished by Defender.

Though conceptually elegant, the technical details of the reduction are 
rather tedious. A full construction
can be found in~\cite{KM:RS-2002-01} (Theorem 3.3), however,
already from the presented proof idea it is easy to see that
the pushdown processes in the proof can be constructed as visibly
pushdown processes because the heights of their stacks are synchronized. 
Note also that, even though Theorem 3.3 in~\cite{KM:RS-2002-01} is 
formulated only for bisimulation,
Attacker's winning strategy is given by playing only in the left-hand side 
process and hence the EXPTIME-hardness holds for
any relation between simulation preorder and bisimulation equivalence. 
 
For the containment in EXPTIME observe that all our equivalence
checking problems were reduced in polynomial time to
model checking of a pushdown automaton against a 
formula of modal $\mu$-calculus with alternation depth $1$. 
The complexity of the
model checking problem for a pushdown automaton with $m$ states
and $k$ stack symbols and a formula of the size $n_1$ and of the
alternation depth $n_2$ is $O( (k2^{cmn_1n_2})^{n_2}) )$
for some constant $c$~\cite{Wal:PDA:01}. In our case
for a given vPDA system with $m$ states and $k$ stack
symbols we construct a PDA system with $O(m^2)$ states
and with $O(k^3\cdot|\act|)$ stack symbols (used in the transition rules). 
The size of the $\mu$-calculus formula is $O(|\act|)$ (for some
equivalences even $O(1)$) and the alternation depth is $1$.
Hence the overall time complexity of checking whether
two vPDA processes $pX$ and $qY$ are 
equivalent is $(k^3\cdot|\act|)2^{O(m^2\cdot|\act|)}$.
\end{proof}


\subsection{Visibly Basic Process Algebra}

We shall now focus on the complexity of preorder/equivalence
checking on vBPA, a strict subclass of vPDA.

\begin{thm}
Simulation, completed simulation, ready simulation and
2-nested simulation  preorders and equivalences, as well as
bisimulation equivalence are P-complete on vBPA.
\end{thm}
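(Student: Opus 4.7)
The proof splits into P-hardness and P-containment.

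For P-hardness, I would observe that every finite-state LTS embeds trivially as a vBPA: take its states as stack symbols, classify every action as internal, and encode each LTS transition $X \goes{a} Y$ as the vBPA rule $X \goes{a} Y$ (valid since $|Y|=1$, as required for internal actions). Reachable configurations stay of stack height one, so the generated vBPA transition system is isomorphic to the original finite LTS and every equivalence/preorder in the statement is preserved. Since bisimilarity, simulation preorder, and the intermediate relations up to 2-nested simulation are P-hard on finite-state systems (by standard reductions from monotone circuit value or Horn satisfiability), P-hardness transfers to vBPA.

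For P-containment, my plan is to reduce the problem to equivalence/preorder checking on a polynomial-size finite-state LTS and then invoke the known P-time algorithms for finite systems. As a starting point I would apply the construction of Lemma~\ref{lem:mu} to the given vBPA $\Delta$; because $|Q|=1$, the resulting $\Delta'$ is a single-state PDA (a BPA) and the formulas $\phi_\bowtie$ are alternation-free of depth one. The containment will then follow by summarizing the reachable configurations of $\Delta'$ into a finite LTS $F$ whose states encode only the topmost $\Gamma'$-symbol, together with a polynomial amount of bookkeeping about the stack context below.

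Concretely, $F$ would be built by a fixed-point iteration over the finite set $\Gamma \times \Gamma$: for each such pair $(X,Y)$ one computes the set of possible top pairs produced by a matched call/return bracket starting from $(X,Y)$. Because vBPA synchronizes call and return actions on both sides of the game, such a bracket summary is well-defined and can be computed in polynomial time by iterating monotone operations over the at most $|\Gamma|^2$ pairs. Transitions in $F$ are then induced by (i) single internal or visibly-observable moves of $\Delta'$ and (ii) the precomputed bracket summaries standing in for matched call/return sequences.

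The main obstacle, which I expect to carry the technical content of the proof, is to show correctness of this summarization uniformly across all the relations in the statement, i.e.\ that $pX \bowtie qY$ in $\Delta$ iff the state of $F$ corresponding to $(X,Y)$ is $\bowtie$-related to itself under the induced relation. This requires a congruence-style lemma: each of the studied relations on vBPA is compatible with the bracket structure of the stack, so the unbounded stack can be collapsed to a finite bookkeeping of bracket summaries without changing equivalence. Once this lemma is in place, the polynomial-time algorithms for the corresponding relations on finite-state systems yield the desired P upper bound for every $\bowtie$ at once.
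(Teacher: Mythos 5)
Your P-hardness half is fine and matches the paper, which also just invokes P-hardness of equivalence checking on finite-state systems~\cite{SawaJancar:PTIME-hard:CaI:2005}; your embedding of a finite LTS as a vBPA over internal actions is correct. The P-containment half, however, has a genuine gap, and it is exactly the part you defer to a ``congruence-style lemma.'' First, your stated correctness criterion --- that $pX \bowtie qY$ in $\Delta$ iff the corresponding state of $F$ is ``$\bowtie$-related to itself'' --- is vacuous: every state is related to itself by reflexivity of all the relations in question, so the reduction as formulated cannot be what you mean. Second, and more fundamentally, replacing a matched call/return bracket by a precomputed summary transition recording only the resulting top pairs is unsound for \emph{every} relation between trace equivalence and bisimilarity: the call and return actions inside a bracket are visible actions of the system, and the (bi)simulation game can be won or lost strictly inside a bracket (e.g.\ by a mismatch of initial actions, or by a branching choice at an intermediate configuration). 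A summary that remembers only reachable top pairs collapses distinct branching behaviour and already fails for traces. The lemma you would need is not ``compatibility with the bracket structure'' but a genuine decomposition property of the form $X\alpha \sim X'\alpha'$ iff $X \sim X'$ and, \emph{in case $X$ or $X'$ can be emptied}, $\alpha \sim \alpha'$; the emptiability side condition is where the subtlety lies, and nothing in your sketch supplies it.

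For comparison, the paper's proof of this theorem does not summarize at all: it first reduces (by a visibly-adapted version of Theorem~4 of~\cite{Srba:concur:2001}) to vBPA over \emph{three} actions, so that the $\mu$-calculus formulae of Lemma~\ref{lem:mu} have constant size, and then invokes Walukiewicz's model-checking bound~\cite{Wal:PDA:01} with a single control state ($m=1$) and alternation depth $1$, yielding time $O(k^3)$. Note that this action-reduction step is not cosmetic: Walukiewicz's bound is exponential in the formula size, and for ready simulation the formula of Lemma~\ref{lem:mu} has size $O(|\act|)$, so applying Lemma~\ref{lem:mu} directly (as you propose as your ``starting point'') gives a bound exponential in $|\act|$, not polynomial. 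The finite-state reduction you are groping toward does exist in the paper as the separate Theorem~\ref{thm:vBPAreduction}, but it rests on the decomposition Lemma~\ref{lem:decomposition} and keeps all original visible actions, merely adding fresh actions $1$ and $2$ so that after a push producing a pair-state $(Y,Z)$ Attacker may continue either on $Y$ or --- only when $Y \goes{}^* \epsilon$ --- on $Z$. If you want to salvage your route, prove that decomposition lemma (with its emptiability condition) for each of the relations in the statement and adopt that construction; your bracket-summary fixed point should be discarded.
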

\begin{proof}
We can w.l.o.g. assume that the vBPA processes tested for
a given equivalence have only three actions.
This is due to Theorem~4 in~\cite{Srba:concur:2001} where
a polynomial time reduction from bisimulation checking of
two BPA processes to bisimulation checking of two BPA processes
over a singleton action set is given. In order to make the resulting
processes visibly BPA, we modify in a straightforward way 
the reduction in~\cite{Srba:concur:2001} by using three actions
so that we can distinguish between push, pop and internal actions.

Now by using the arguments from the proof of
Theorem~\ref{thm:EXP-vPDA}, the complexity of equivalence checking
on vBPA with $|\act|=3$
is therefore $O(k^3)$ where
$k$ is the cardinality of the stack alphabet (of the
reduced processes according to~\cite{Srba:concur:2001}) and where $m=1$. 

P-hardness of the equivalence checking problems
was proved in~\cite{SawaJancar:PTIME-hard:CaI:2005} 
even for finite state systems.
\end{proof}

In fact, for vBPA we can introduce even tighter complexity
upper bounds by reducing it to preorder/equivalence checking
on finite state systems. These results are due to the following
decomposition property.

\begin{lem} \label{lem:decomposition}
Let $\Delta$ be a vBPA system.
We have $X\alpha \sim X'\alpha'$ in $\Delta$
(where $X,X' \in \Gamma$ and $\alpha,\alpha'\in \Gamma^*$)
if and only if
\begin{enumerate}[\em(i)]
\item $X \sim X'$ in $\Delta$, and
\item if ($X \goes{}^* \epsilon$ or $X' \goes{}^* \epsilon$) then
$\alpha \sim \alpha'$ in $\Delta$.
\end{enumerate}
\end{lem}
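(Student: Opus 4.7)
The plan is to prove the two implications separately, exploiting the vBPA \emph{height-discipline}: an action's type (call, return, internal) determines the change in stack height by $+1$, $0$, or $-1$, so the height evolves deterministically along any fixed action sequence.

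For the direction $\Rightarrow$, assume $X\alpha \sim X'\alpha'$. I would obtain (i) by showing that
\[
R_1 \defin \{(\epsilon,\epsilon)\} \cup \{(\mu,\mu') \mid \mu\alpha \sim \mu'\alpha',\ 1\le|\mu|=|\mu'|\}
\]
is a bisimulation in $\Delta$: for $\mu\neq\epsilon$ every rule applicable in $\mu\alpha$ fires on the top of $\mu$, so transitions of $\mu$ and of $\mu\alpha$ are in bijection, and the height-discipline preserves $|\mu|=|\mu'|$ through any common action; since $(X,X') \in R_1$ this yields $X \sim X'$. For (ii), assume w.l.o.g.\ $X \goes{w} \epsilon$ for some word $w$. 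Every prefix of $w$ has cumulative height change at least $-1$, hence in $\Delta$ we have $X\alpha \goes{w} \alpha$ and $\alpha$ is never touched. In the bisimulation game from $(X\alpha,X'\alpha')$ the attacker plays $w$; whichever responses the defender picks on the right, the height evolution is fixed by $w$ alone, so the right-hand configuration ends at height $|\alpha'|$ with the $\alpha'$-portion untouched --- i.e., equals $\alpha'$. Bisimilarity is preserved throughout the play, hence $\alpha \sim \alpha'$.

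For the direction $\Leftarrow$, I would split on whether $X \goes{}^* \epsilon$ (equivalently, by (i) and the same height argument, $X' \goes{}^* \epsilon$). If neither holds, then no configuration reachable from $X$ equals $\epsilon$, and
\[
R_2 \defin \{(\nu\alpha,\nu) \mid \nu \text{ is reachable from } X \text{ in } \Delta\}
\]
is easily a bisimulation (rules on $\nu$ and on $\nu\alpha$ correspond bijectively since $\nu\neq\epsilon$); this gives $X\alpha \sim X$, symmetrically $X'\alpha' \sim X'$, and transitivity with (i) delivers $X\alpha \sim X'\alpha'$. Otherwise both reach $\epsilon$, so (ii) gives $\alpha \sim \alpha'$, and
\[
R_3 \defin {\sim} \cup \{(\mu\alpha, \mu'\alpha') \mid \mu \sim \mu',\ |\mu|=|\mu'|\}
\]
is a bisimulation containing $(X\alpha,X'\alpha')$: the $\sim$-summand is what catches the moment when both $\mu$ and $\mu'$ are popped to $\epsilon$ and the play continues on $(\alpha,\alpha')$.

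The main technical subtlety throughout is the ``seam'' between the evolving top part $\mu$ and the fixed tail $\alpha$: a transition of $\mu\alpha$ could \emph{a priori} fire on $\alpha$ as soon as $\mu=\epsilon$, and it is precisely this possibility that forces hypothesis (ii) into the statement and the length-equality invariant $|\mu|=|\mu'|$ into $R_1$ and $R_3$ (so that both sides cross the seam simultaneously). Sustaining this invariant across matching transitions is possible only because of the vBPA action-type-to-height correspondence --- the same property that fails for ordinary BPA and is the reason the decomposition lemma is peculiar to the visibly restricted subclass.
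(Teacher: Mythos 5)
Your proposal is correct and takes essentially the same route as the paper: your length invariant $|\mu|=|\mu'|$ in $R_1$ and $R_3$, the forced play of $w$ in part (ii), and the case split handled by $R_2$ are just an explicit, relation-level rendering of the paper's terse game-theoretic observation that, by the visibility discipline, the two sides' stack heights stay synchronized, so the seam $(\alpha,\alpha')$ is reached simultaneously in one round if at all, Attacker wins from within $(X,X')$ if (i) fails, and wins by emptying $X$ or $X'$ if (ii) fails. One cosmetic repair: to conclude that the $\alpha'$-portion stays untouched you need $h(w')\geq 0$ for every \emph{proper} prefix $w'$ of $w$ (which holds because the intermediate configurations along $X \goes{w} \epsilon$ are nonempty, $\epsilon$ being deadlocked), not merely $h(w')\geq -1$.
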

\begin{proof}
The validity of this claim can be easily seen by using the
game characterization. Obviously, in \emph{any} play of the game
starting from $X\alpha$ and $X'\alpha'$,
if the players ever reach either the process $\alpha$ or $\alpha'$,
then it happens simultaneously in one round of the game
and the players continue from the pair $\alpha$ and $\alpha'$.
This means that conditions (i) and (ii) imply that $X\alpha \sim X'\alpha'$.
On the other hand, if (i) does not hold then Attacker can win
by playing only from $X$ and $X'$, never touching $\alpha$ and $\alpha'$
during the game. If (ii) does not hold (i.e., $X$ or $X'$ can be
removed from the stack and $\alpha \not\sim \alpha'$),
then Attacker wins first
by playing a sequence $w$ from $X$ or $X'$ such that $X \goes{w} \epsilon$
or $X' \goes{w} \epsilon$. If Defender could follow this sequence
in the other process then the players reach the pair $\alpha$ and $\alpha'$
and Attacker wins because $\alpha \not\sim \alpha'$. 
\end{proof}

\begin{thm} \label{thm:vBPAreduction}
Simulation, completed simulation, ready simulation and
2-nested simulation  preorders and equivalences, as well as
bisimulation equivalence on vBPA are reducible to checking 
the same preorder/equivalence on finite state systems.
For any vBPA process $\Delta$ 
(with the natural requirement that every stack symbol appears 
at least in one rule from $\Delta$), the reduction is computable in time
$O(|\Delta|)$ and outputs a finite state system with $O(|\Delta|)$
states and $O(|\Delta|)$ transitions.
\end{thm}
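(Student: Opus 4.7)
The plan is to use Lemma~\ref{lem:decomposition} as a compositionality principle that collapses a vBPA system $\Delta$ into a finite state system $M$ of linear size on which the same preorder/equivalence question can be answered. The decomposition lemma says that for vBPA configurations, bisimilarity (and, I will argue, each of the coarser simulation-like relations) is determined pointwise by the top stack symbol, with information below the top being relevant only when the top symbol is poppable.

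As a preprocessing step, I would compute in time $O(|\Delta|)$ the set $\Gamma_{pop} \defin \{X \in \Gamma \mid X \Longrightarrow^{*} \epsilon\}$ of poppable symbols, by a standard backward fixed-point iteration driven by the rules of $\Delta$ (marking $X$ poppable if it has a return rule, or an internal rule into a poppable symbol, or a call rule pushing a pair of poppable symbols). Then I would construct $M$ with state set $\Gamma \cup \{\bot\} \cup \{[r] \mid r \text{ is a call rule of } \Delta\}$, where $\bot$ is a fresh deadlocked state and each $[r]$ is a fresh intermediate state, and with action set $\act \cup \{t_1, t_2\}$ where $t_1, t_2 \notin \act$ are fresh. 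The transitions of $M$ are built rule-by-rule:
\begin{enumerate}[$\bullet$]
\item an internal rule $X \goes{a} Y$ in $\Delta$ gives the transition $X \goes{a} Y$ in $M$;
\item a return rule $X \goes{a} \epsilon$ in $\Delta$ gives $X \goes{a} \bot$ in $M$;
\item a call rule $r = (X \goes{a} YZ)$ in $\Delta$ gives $X \goes{a} [r]$ and $[r] \goes{t_1} Y$ in $M$, together with $[r] \goes{t_2} Z$ whenever $Y \in \Gamma_{pop}$.
\end{enumerate}
Since each rule of $\Delta$ contributes at most three transitions and at most one fresh state, $M$ has $O(|\Delta|)$ states and $O(|\Delta|)$ transitions, and is produced in time $O(|\Delta|)$.

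The main obstacle is correctness: for every $X, X' \in \Gamma$ and every $\bowtie\ \in \{\simul, \simuleq, \csimul, \csimuleq, \rsimul, \rsimuleq, \tnsimul, \tnsimuleq, \sim\}$ I need $X \bowtie X'$ in $\Delta$ iff $X \bowtie X'$ in $M$. I would prove this by transferring winning strategies between the corresponding games. A $\Delta$-move under an internal or return rule is mirrored one-for-one in $M$; a $\Delta$-move under a call rule $r = (X \goes{a} YZ)$ is simulated in $M$ by first entering $[r]$ and then choosing $t_1$ (which forces Defender to reveal her matched top symbol $Y'$, verifying $Y \bowtie Y'$) or, when $Y$ is poppable, $t_2$ (verifying $Z \bowtie Z'$). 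The two fresh actions $t_1$ and $t_2$ are crucial: they force Defender to match top-against-top and bottom-against-bottom, which is precisely what Lemma~\ref{lem:decomposition} demands and prevents Defender from cheating by swapping the two components of a pushed pair. The decomposition lemma itself extends to all the coarser relations by the same game-theoretic argument as given in its proof, and the fact that related stack symbols share the same poppability follows because Defender must be able to mirror any $\epsilon$-reaching sequence played on the other side.
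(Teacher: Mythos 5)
Your construction is essentially identical to the paper's: you compute the poppable symbols $\Gamma_{pop}$, mimic internal and return rules directly (with your $\bot$ playing the role of the paper's state $\epsilon$), and replace each call rule by an intermediate state with fresh actions $t_1,t_2$ (the paper's actions $1,2$), adding the $t_2$-transition to the pushed bottom symbol only when the top is poppable, with correctness resting on Lemma~\ref{lem:decomposition} exactly as in the paper. The only cosmetic difference is that you use one intermediate state per call rule rather than one per pushed pair $(Y,Z)$, which changes nothing in the $O(|\Delta|)$ bounds or the argument.
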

\begin{proof}
Let $\act = \actc \cup \actr \cup \acti$ be the set of
actions and let $\Gamma$ be the stack alphabet of a given
vBPA system $\Delta$ (we shall omit writing the control
states as this is a singleton set). 
Let $S \defin \{ (Y,Z) \in \Gamma\times\Gamma \mid 
(X \goes{a} YZ) \in \Delta \text{ for some $X \in \Gamma$
and $a \in \actc$ }\}$.
We construct
a finite state transition system
$T = (\Gamma \cup \{\epsilon\} \cup S,
\act \cup \{1,2\}, \goesw{})$
with new fresh actions $1$ and $2$ as follows.
For every vBPA rule $(X \goes{a} \alpha) \in \Delta$, we add the
transitions: 
\begin{enumerate}[$\bullet$]
\item $X \goesw{a} \epsilon$ if $a \in \actr$ (and $\alpha = \epsilon$),
\item $X \goesw{a} Y$ if $a \in \acti$ and $\alpha = Y$, 
\item $X \goesw{a} (Y,Z)$ if $a \in \actc$ and $\alpha = YZ$,
\item $(Y,Z) \goesw{1} Y$ if $a \in \actc$ and $\alpha = YZ$, and
\item $(Y,Z) \goesw{2} Z$ if $a \in \actc$ and $\alpha = YZ$
such that $Y \goes{}^* \epsilon$.
\end{enumerate}

See Figure~\ref{fig:transformation} for an example
of the transformation. 
The intuition is that all transitions in $\Delta$ that do not push any
extra symbols on the stack are directly mimicked in $T$.
For the push transitions we know, thanks to Lemma~\ref{lem:decomposition},
that if Attacker has a winning strategy, then it is either
by attacking only from the top-most symbols pushed on the stacks,
or from the rest of the stacks (if the top-most symbols can be eventually
popped). Hence Attacker has the possibility to choose one of these
options by playing the actions $1$ or $2$ and the transition system $T$
remains so finite state.
 
Note that the set $\{ Y \in \Gamma \mid Y \goes{}^* \epsilon\}$ 
can be (by standard techniques) computed in time $O(|\Delta|)$.
Moreover, the finite state system $T$ has $O(|\Delta|)$ states
and $O(|\Delta|)$ transitions.

\begin{figure}[t]
\begin{center}
\begin{tabular}{cc}
\begin{minipage}{4cm}
$X \goes{a} Y$ \\ $X \goes{b} \epsilon$ \\ $X \goes{c} XY$ \\ 
$Y \goes{b} \epsilon$
\end{minipage}
&
\begin{minipage}{4cm}
\mbox{
\xymatrix@!R@!C@R=14ex@C=14ex{
X \ar@/^7pt/@{=>}[d]^c \ar@{=>}[r]^b \ar@{=>}[dr]^a & \epsilon \\
(X,Y) \ar@/^7pt/@{=>}[u]^1 \ar@{=>}[r]^2 & Y \ar@{=>}[u]_b
}}
\end{minipage}
\end{tabular}
\end{center}
\caption{Transformation of a vBPA into a finite state system} 
\label{fig:transformation}
\end{figure}
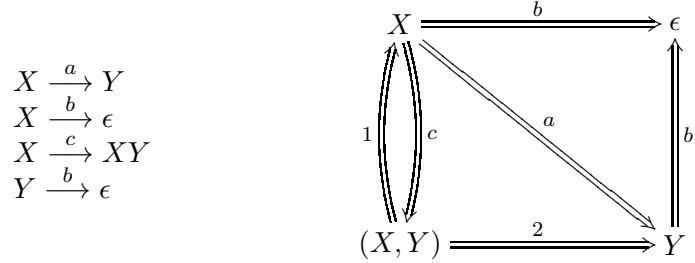

Because of the decomposition property proved in Lemma~\ref{lem:decomposition}, 
for any $X, Y \in \Gamma$ we have that
$X \sim Y$ in $\Delta$ iff 
$X \sim Y$ in $T$.
It is easy to check that the fact above holds also for any 
other preorder/equivalence as stated by the theorem.
\end{proof}

This means that for preorder/equivalence checking on vBPA we
can use the efficient algorithms already developed for
finite state systems.
For example, for finite state transition systems with
$k$ states and $t$ transitions, bisimilarity
can be decided in time $O(t \log k)$~\cite{PaigeTarjan87}.
Hence bisimilarity on a vBPA system $\Delta$ is decidable in
time $O(|\Delta|\cdot\log |\Delta|)$.

\subsection{Visibly One-Counter Automata}

We will now continue studying preorder/equi\-va\-lence
checking problems on v1CA, a strict subclass of vPDA
and an incomparable class with vBPA (w.r.t. bisimilarity).
We start by showing PSPACE-hardness of the problems.
The proof is by reduction from a PSPACE-complete problem
of emptiness of one-way alternating finite automata over
a one-letter alphabet~\cite{Holzer:95}.

A \emph{one-way alternating finite automaton over a one-letter
alphabet} is a 5-tuple $A = (\Qe, \Qu, q_0, \delta, F)$
where $\Qe$ and $\Qu$ are finite and disjoint sets of
existential, resp. universal control states, $q_0 \in 
\Qe \cup \Qu$ is the initial state, $F \subseteq \Qe \cup \Qu$
is the set of final states and $\delta: \Qe \cup \Qu \rightarrow
2^{\Qe \cup \Qu}$ is the transition function. We assume without
loss of generality that $|\delta(q)|>0$ for all $q \in \Qe \cup \Qu$.

A \emph{computation tree} for an input word of the form
$I^n$ (where $n$ is a natural number and
$I$ is the only letter in the input alphabet)
is a tree where every branch has exactly $n+1$ nodes labelled by 
control states from $\Qe \cup \Qu$ such that the root is labelled with $q_0$
and every non-leaf node that is already labelled by some 
$q \in \Qe \cup \Qu$ such that $\delta(q) = \{q_1, \ldots, q_k\}$ has either
\begin{enumerate}[$\bullet$]
\item one child labelled by $q_i$ for some
 $i$, $1 \leq i \leq k$, if $q \in \Qe$, or
\item $k$ children labelled by $q_1$, \ldots, $q_k$, if $q \in \Qu$.
\end{enumerate}
A computation tree is \emph{accepting} if the labels of all its
leaves are final (i.e. belong to $F$). The language of $A$ is defined
by $L(A) \defin \{I^n \mid I^n \text{ has some accepting computation
tree }\}$.

The emptiness problem for one-way alternating finite automata over a 
one-letter alphabet (denoted as \Empty) is to decide whether $L(A) = \emptyset$
for a given automaton $A$. The problem $\Empty$ is known to be
PSPACE-complete due to Holzer~\cite{Holzer:95} and it is proved by 
a series of reductions (see also a direct
proof by Jan\v{c}ar and Sawa~\cite{JS:1L-AFA}).

In what follows we shall demonstrate a polynomial time reduction from
$\Empty$ to equivalence/preorder checking on visibly one-counter automata.
We shall moreover show the reduction for any (arbitrary) relation
between simulation preorder and bisimulation equivalence. This 
in particular covers all preorders/equivalences introduced in
this article. 

\begin{lem} \label{lem:pspace}
All relations between simulation preorder and bisimulation equivalence
are PSPACE-hard on v1CA. 
\end{lem}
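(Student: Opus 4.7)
The plan is to give a polynomial-time reduction from \Empty, using the v1CA counter to let Attacker pick an input length. Given an instance $A = (\Qe, \Qu, q_0, \delta, F)$, I will construct a v1CA together with two initial configurations $\hat p(0)$ and $\hat q(0)$ such that $\hat p(0) \sim \hat q(0)$ iff $L(A) = \emptyset$; moreover, whenever these two configurations fail to be bisimilar, Attacker will have a winning strategy that plays only on the left-hand side. This last property is exactly what propagates PSPACE-hardness from bisimilarity down to simulation preorder, and hence to every relation in between, via the same argument used at the end of the proof of Theorem~\ref{thm:EXP-vPDA}.

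The construction has three phases. In the \emph{length-selection phase}, both sides admit identical call-transitions that push $I$ onto the counter, together with a distinguished internal transition $\hat p I \goes{\iota} p_{q_0} I$ and its mirror $\hat q I \goes{\iota} q_{q_0} I$ that ``closes'' the choice. Because these transitions are pairwise symmetric, Defender is forced to mirror Attacker's pushes, so Attacker effectively fixes a counter value $n \in \Nat$ and triggers the AFA simulation from control states $p_{q_0}, q_{q_0}$. In the \emph{simulation phase}, each AFA rewrite is encoded as a branch-selection step (an internal action, counter unchanged) followed by a decrement (a return action). For existential $q \in \Qe$ with $\delta(q) = \{q_1,\dots,q_k\}$, both sides carry $k$ parallel internal transitions labelled $b_1,\dots,b_k$ leading to paired auxiliary states $p^+_{q_j}, q^+_{q_j}$, and then a return to $p_{q_j}, q_{q_j}$; Attacker picks the branch by choosing a label and Defender is forced to match. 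For universal $q \in \Qu$, I invoke the Defender's Forcing Technique (see~\cite{JS:08:JACM} and the instance used in the EXPTIME-hardness proof): an asymmetric gadget in which Defender must commit to some branch $j$ while Attacker's moves remain confined to the left side. In the \emph{acceptance phase}, once the counter hits $0$ the current pair is $(p_q(0), q_q(0))$, and distinguishing transitions on the bottom-of-stack marker $Z$ (available exclusively when $q \in F$) are introduced so that this pair is bisimilar iff $q \notin F$.

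A straightforward induction on the counter value then shows that the game tree exactly mirrors an AFA computation tree for $I^n$: at existential states the branching is Attacker's, at universal states it is Defender's (via the forcing gadget), and at counter $0$ Attacker can expose a final leaf via the distinguishing $Z$-action. Hence the pair reached after Phase~1 at height $n$ is bisimilar iff every computation tree of $A$ on $I^n$ has a rejecting leaf, and $\hat p(0) \sim \hat q(0)$ iff $L(A) = \emptyset$. Because Attacker's winning strategy only plays on the left, the same strategy wins every simulation-like game between the two processes, and Defender's bisimulation strategy trivially wins the more restrictive preorder games as well, so the reduction establishes PSPACE-hardness uniformly for the whole spectrum. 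The main obstacle I expect is designing the Defender's Forcing gadget inside the limited v1CA format: the stack alphabet has only two symbols, so branch information must live entirely in the control states, and every internal-action auxiliary step must respect the $|\alpha|=1$ constraint. The gadget of~\cite{JS:08:JACM} is already of this style (pure control-state bookkeeping, interleaved between counter-modifying steps), so it transports to v1CA with only a polynomial blow-up in the number of control states, keeping the overall reduction polynomial.
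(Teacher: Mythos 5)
Your proposal follows essentially the same route as the paper's proof: a polynomial reduction from \Empty{} in which Attacker fixes the input length on the counter via call actions, existential AFA choices belong to Attacker, universal choices are delegated to Defender through a Defender's Forcing gadget realized purely in control states (an internal step followed by a decrementing return), acceptance is detected at counter zero by a transition on $Z$ available only in final states, and Attacker's winning strategy plays only on the left-hand side, which is exactly how the paper propagates hardness across the whole spectrum. The differences are cosmetic---the paper merges your two-step existential gadget (internal branch selection, then return) into single return actions $d_{q_i}$, and writes out explicitly the forcing gadget you cite---though note that your switch transition $\hat{p} I \goes{\iota} p_{q_0} I$ should also be provided on the bottom symbol $Z$ (as the paper's rule $pZ \goes{a} q_0 Z$ is), so that the input word $I^0$ is not missed.
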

\begin{proof}
Let $A=(\Qe,\Qu,q_0,\delta,F)$ be a given instance of $\Empty$. 
We shall construct a visibly one-counter automaton $\Delta$
over the set of actions
$\actc \defin \{i\}$, 
$\actr \defin \{d_q \mid q \in \Qe \cup \Qu \}$, 
$\acti \defin \{a, e\}$ 
and with control states
$Q \defin \{p,p',t\} \cup \{q, q', t_q \mid q \in \Qe \cup \Qu \}$
such that
\begin{enumerate}[$\bullet$]
\item if $L(A)=\emptyset$ then Defender has a winning
strategy from $pZ$ and $p'Z$ in the bisimulation game
(i.e. $pZ \sim p'Z$), and
\item if $L(A)\not=\emptyset$ then Attacker has
a winning strategy from $pZ$ and $p'Z$ in the simulation
preorder game (i.e. $pZ \not\simul p'Z$).
\end{enumerate}
The intuition is that Attacker generates some counter value $n$
in both of the processes $pZ$ and $p'Z$
and then switches into a checking phase by changing the
configurations to $q_0(n)$ and $q'_0(n)$. Now the players 
decrease the counter and change the control states according
to the function $\delta$. Attacker selects the successor
in any existential configuration, while Defender makes the choice
of the successor in every
universal configuration. Attacker wins if the players reach
a pair of configurations $q(0)$ and $q'(0)$ where $q \in F$.

We shall now define the set of rules $\Delta$.
The initial rules allow Attacker (by performing repeatedly the
action $i$) to set the counter into an
arbitrary number, i.e., Attacker generates a candidate word from $L(A)$.
\begin{center}
\begin{tabular}{ll}
$pZ \goes{i} pIZ$ \ \ \ \ \ \ \  & $p'Z \goes{i} p'IZ$ \\
$pI \goes{i} pII$ & $p'I \goes{i} p'II$ \\
$pZ \goes{a} q_0Z$ & $p'Z \goes{a} q'_0Z$ \\
$pI \goes{a} q_0I$ & $p'I \goes{a} q'_0I$ 
\end{tabular}
\end{center}
Observe that Attacker is at some point forced to perform
the action $a$ (an infinite play is winning for Defender)
and switch to the checking phase starting from
$q_0(n)$ and $q'_0(n)$.

Now for every existential state $q \in \Qe$ with
$\delta(q) = \{q_1, \ldots, q_k\}$ and for
every $i \in \{1, \ldots, k\}$ we add the following rules.
\begin{center}
\begin{tabular}{ll}
$qI \goes{d_{q_i}} q_i$ \ \ \ \ \ \ \ &
$q'I \goes{d_{q_i}} q'_i$ 
\end{tabular}
\end{center}
This means that Attacker can decide on the successor $q_i$ of $q$ and
the players in one round move from the pair $q(n)$ and $q'(n)$ into
$q_i(n-1)$ and $q'_i(n-1)$.

Next for every universal state $q \in \Qu$ with
$\delta(q) = \{q_1, \ldots, q_k\}$ and for
every $i \in \{1, \ldots, k\}$ 
we add the rules
\begin{center}
\begin{tabular}{ll}
$qI \goes{a} tI$ \ \ \ \ \ \ \ & $q'I \goes{a} t_{q_i}I$  \\
$qI \goes{a} t_{q_i}I$ 
\end{tabular}
\end{center}
and for every $q,r \in \Qe \cup \Qu$ such that $q \not= r$ we add
\begin{center}
\begin{tabular}{ll}
$tI \goes{d_{q}} q$ \ \ \ \ \ \ \ &
$t_{q}I \goes{d_{q}} q'$ \\
& $t_{q}I \goes{d_{r}} r$ \ . 
\end{tabular}
\end{center}

These rules are more complex and they correspond to a particular
implementation of the so-called \emph{Defender's Forcing Technique}
(for further examples see e.g.~\cite{JS:08:JACM}). We shall explain
the idea by using Figure~\ref{figure-choice}. 
\begin{figure}
\begin{center}
\mbox{
\xymatrix@!R@!C@R=8.5ex@C=3ex{
q(n) \ar[dd]^a \ar[ddrr]^a \ar[ddrrr]^a \ar[ddrrrr]^a
& & & q'(n) \ar[ddl]_a \ar[ddr]^a \ar[dd]^a \\ \\ 
t(n)\ar[dd]^{d_{q_i}} \ar[ddrr]^{d_{r}}
& & t_{q_1}(n) & t_{q_i}(n) \ar[ddl]_{d_{r}} \ar[dd]^{d_{q_i}} 
& t_{q_k}(n) \\ & & \forall r \not= q_i \mbox{\ \ \ \ \ \ \ } \\
q_i(n-1) & & r(n-1) & q'_i(n-1)
}}
\end{center}
\caption{Defender's Choice: $q\in\Qu$ and $\delta(q)=\{q_1,\ldots,q_k\}$}
\label{figure-choice}
\end{figure}
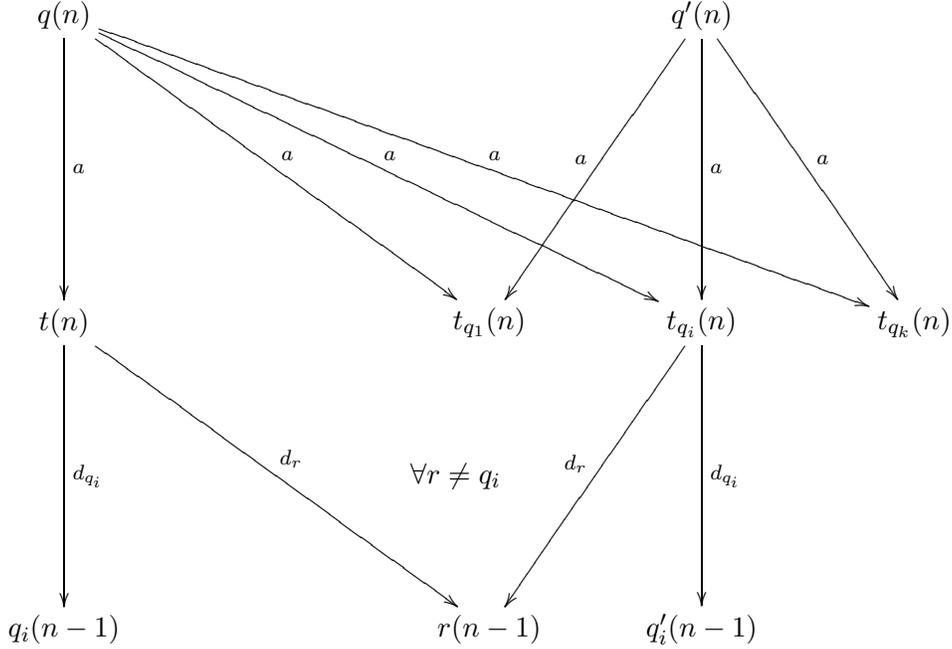
Assume that $q \in \Qu$ and $\delta(q)=\{q_1, \ldots, q_k\}$.
In the first round of the bisimulation game
starting from $q(n)$ and $q'(n)$ where $n>0$,
Attacker is forced
to take the move $q(n) \goes{a} t(n)$. On any other move Defender
answers by immediately reaching a pair of syntactically equal
processes (and thus wins
the game). Defender's answer on Attacker's move $q(n) \goes{a} t(n)$
is to perform $q'(n) \goes{a} t_{q_i}(n)$ for some $i \in \{1, \ldots, k\}$.
The second round thus starts from the pair $t(n)$ and $t_{q_i}(n)$.
Should Attacker choose to play the action $d_{r}$ for
some state $r$ such that $r \not= q_i$ (on either side), Defender
can again reach a syntactic equality and win. Hence Attacker is
forced to play the action $d_{q_i}$ on which Defender answers by
the same action in the opposite process and the players reach
the pair $q_i(n-1)$ and $q'_i(n-1)$. Note that it was Defender
who selected the new control state $q_i$.

Finally, for every $q \in F$ we add the rule
$$qZ \goes{e} qZ\ .$$

It is easy to see that $\Delta$ is a visibly one-counter
automaton and we shall now argue for the correctness of the reduction.

Assume that $L(A)=\emptyset$. We shall argue that Defender
has a winning strategy in the bisimulation game starting
from $pZ$ and $p'Z$. In the first phase Attacker can generate an
arbitrary number of the symbols $I$ on the stacks. At some point
he has to perform the action $a$ 
and switch to $q_0(n)$ and $q'_0(n)$ for some $n$. The
players now remove the symbols $I$ one by one and change the control
states according to the function $\delta$. As $L(A)=\emptyset$,
we know that no computation tree can be accepting. This 
means that whatever choices Attacker makes in existential states,
Defender can still select suitable successors of universal states
such that when the players empty the whole counter and arrive to the 
pair $qZ$ and $q'Z$, Defender guarantees that $q \not\in F$.
Therefore $qZ$ and $q'Z$ are stuck and thus Defender has a 
winning strategy in the bisimulation game.

On the other hand, if $L(A)\not=\emptyset$, we will demonstrate
Attacker's winning strategy in the simulation preorder game
starting from $pZ$ and $p'Z$. Attacker first forces (by repeatedly
performing the action $i$ followed by one action $a$) to reach
a pair of states $q_0(n)$ and $q_0'(n)$ such that $I^n \in L(A)$.
In the checking phase, there is an accepting computation tree for
the word $I^n$ and hence 
Attacker can make existential choices such that whatever universal
choices Defender makes, the players arrive to the situation
$qZ$ and $q'Z$ for some $q \in F$. Now Attacker wins by playing
$qZ \goes{e} qZ$ to which Defender has no answer. Notice that
during the whole game (and particularly during the part where Defender
chooses a successor of a universal state) Attacker can make his moves
only on the left-hand side. Therefore $pZ \not\simul p'Z$ as required.
\end{proof}

\begin{rem} \label{rem:1C}
The reduction above works
also for a strict subclass of one-counter automata called
one-counter nets (where it is not allowed to
test for zero, see e.g.~\cite{JKMS:04}). It is enough to replace
the final rule $qZ \goes{e} qZ$ 
with two new rules $q \goes{e} q$ and $q'I \goes{e} q' I$
for every $q \in F$. Moreover, a slight modification of the
system allows to show PSPACE-hardness of simulation preorder
checking between one-counter automata and finite state systems
and vice versa. Hence the previously known DP lower bounds~\cite{JKMS:04} for 
all relations between simulation preorder and bisimulation equivalence
on one-counter nets (and one-counter automata) as well as
of simulation preorder/equivalence between one-counter automata
and finite state systems, and between finite state systems and
one-counter automata are raised to PSPACE-hardness.
\end{rem}

We are now ready to state the precise complexity of
(bi)simulation-like preorders/equi\-va\-lences on visibly one-counter
automata.

\begin{thm} 
Simulation, completed simulation, ready simulation and
2-nested simulation  preorders and equivalences, as well as
bisimulation equivalence are PSPACE-complete on v1CA.
\end{thm}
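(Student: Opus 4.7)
The lower bound is already in hand from Lemma~\ref{lem:pspace}, which proves PSPACE-hardness for every relation between simulation preorder and bisimulation equivalence. What remains is a matching PSPACE upper bound, and the plan is to reuse the generic reduction of Lemma~\ref{lem:mu} verbatim and then argue that, in the v1CA case, the resulting model-checking instance can be solved in polynomial space rather than in exponential time.

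Concretely, starting from a v1CA $\Delta$ with stack alphabet $\Gamma=\{I,Z\}$, the generic construction yields a PDA $\Delta'$ with $|Q|^2$ control states and stack alphabet $\Gamma'=G\times G$ of size $O(|\act|^2)$. The key structural observation is that every reachable stack content of $\Delta'$ has the shape $\tau\cdot (I,I)^n(Z,Z)$, where $\tau\in\Gamma'\cup\{\varepsilon\}$ is a bounded-size ``gadget'' sitting at the very top and consumed within at most one round. This follows from the height-synchronization enforced by the vPDA discipline together with the v1CA restriction that all original stacks are of the form $I^nZ$: matched call rounds grow the synchronized part of the stack by a single pair $(I,I)$, matched return rounds shrink it by one, and internal rounds leave it unchanged. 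Hence, after absorbing $\tau$ into the finite control, $\Delta'$ becomes a genuine one-counter automaton with polynomially many control states and a single counter. Since the formulas $\phi_{\bowtie}$ from Lemma~\ref{lem:mu} have size $O(|\act|)$ and alternation depth one, it then suffices to invoke a known PSPACE algorithm for modal $\mu$-calculus model checking on one-counter systems (e.g.\ via parity games on one-counter transition graphs).

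The main technical obstacle will be making the informal ``absorb the top gadget into the control state'' step precise while ensuring that the modalities $\may{\ell}$, $\may{r}$, $\may{a}$ and $\may{\overline{a}}$ occurring in $\phi_{\bowtie}$ retain their intended meaning after the transformation; because $\tau$ has constant length and each of these modalities is decidable from the current control state together with at most the two top stack positions, this is careful but finite bookkeeping rather than a conceptual difficulty. Everything else---polynomial-time construction of $\Delta'$ and $\phi_{\bowtie}$, correctness inherited from Lemma~\ref{lem:mu}, and the final call to a $\mu$-calculus model-checker on a one-counter process---is routine, so the overall algorithm runs in polynomial space and the theorem follows.
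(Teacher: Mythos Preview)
Your proposal is correct and follows essentially the same route as the paper: PSPACE-hardness from Lemma~\ref{lem:pspace}, and for the upper bound you reuse Lemma~\ref{lem:mu}, observe that the resulting pushdown system has stacks of the form $\tau\cdot(I,I)^n(Z,Z)$ with a bounded top gadget $\tau$, absorb $\tau$ into the control states to obtain a genuine one-counter automaton, and then invoke the PSPACE $\mu$-calculus model-checking result for one-counter processes~\cite{Serre:06}. The paper's own proof is terser (it dismisses the top-of-stack bookkeeping as ``standard techniques''), but the argument is the same.
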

\begin{proof}
PSPACE-hardness follows from Lemma~\ref{lem:pspace}.
Containment in PSPACE is due to Lemma~\ref{lem:mu} and
due to~\cite{Serre:06} where
it was very recently showed that model checking modal $\mu$-calculus 
on one-counter automata is decidable in PSPACE.
The only slight complication is that the system
used in Lemma~\ref{lem:mu} is not necessarily 
a one-counter automaton. All stack symbols are of the
form $(I,I)$ or $(Z,Z)$ which is fine, except for the
very top of the stack where more different stack symbols are used.
Nevertheless, by standard techniques, the top of the stack
can be remembered in the control states in order to apply
the result from~\cite{Serre:06}.
\end{proof}

\section{Decidability of Regularity Checking} \label{sec:regularity}

In this section we ask the question whether a given
vPDA process is equivalent to some finite state system. Should this be
the case, we call the process \emph{regular}
(w.r.t. the considered equivalence). 

\begin{defi}
A vPDA process $pX$ is \emph{regular w.r.t. a given equivalence $\equiv$}
iff there is a finite state process $F$ such that 
$pX \equiv F$.
\end{defi}

Note that we do not fix any
particular equivalence $\equiv$ in the definition above.
This is on purpose as the results in this section are generic
and hold for regularity w.r.t. many different equivalence notions.

We shall now give a semantical characterization
of regular vPDA processes via the property of
unbounded popping and a polynomial time
decision algorithm to test whether a given process satisfies this property. 

Let $\act = \actc \cup \actr \cup \acti$ be the set of
actions of a given vPDA. We define a function 
$h: \act \rightarrow \{-1,0,+1\}$ by
$h(a) = +1$ for all $a \in \actc$,
$h(a) = -1$ for all $a \in \actr$, and
$h(a) = 0$ for all $a \in \acti$. The function
$h$ can be naturally extended to sequences of actions by
$h(a_1\ldots a_n)= \sum_{i \in \{1,\ldots,n\}}h(a_i)$. 
Observe now that
for any computation $p\alpha \goes{w} q\beta$ we have
$|\beta| = |\alpha| + h(w)$.

\begin{defi}
Let $pX$ be a vPDA process. We say that
$pX$ provides \emph{unbounded popping} if for
every natural number $d$ there is a process 
$q\beta$ and a word $w \in \act^*$ such that
$h(w) \leq -d$ and $pX \goes{}^* q\beta \goes{w}$\ .
\end{defi}

\begin{lem} \label{lem:reg1}
Let $pX$ be a vPDA process which provides unbounded
popping. Then $pX$ is not regular w.r.t. trace equivalence.
\end{lem}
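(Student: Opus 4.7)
The plan is to proceed by contradiction: suppose $pX$ is trace equivalent to a finite state system $F$ having only finitely many (say $k$) reachable states. The central observation I will exploit is the rigid stack-height invariant of vPDA: whenever $pX \goes{v} r\gamma$ one necessarily has $|\gamma| = 1 + h(v)$, and consequently any continuation $r\gamma \goes{w}$ satisfies $h(w) \geq -(1 + h(v))$, because the stack can never become negative and return actions are forbidden on the empty stack. So from any configuration reachable via $v$ from $pX$, the amount of popping achievable along a single trace is capped by $1 + h(v)$.

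By unbounded popping, for every $d$ there exist words $u_d, w_d \in \act^*$ with $pX \goes{u_d} q_d\beta_d \goes{w_d}$ and $h(w_d) \leq -d$; in particular $u_d w_d$ is a trace of $pX$, hence of $F$. Thus $F \goes{u_d} F'_d \goes{w_d}$ for some reachable state $F'_d$. Since only $k$ states of $F$ are reachable, pigeonhole yields a single state $F^*$ that arises as $F'_d$ for arbitrarily large $d$. Fixing any word $v$ with $F \goes{v} F^*$, I obtain that $v w_d$ is a trace of $F$, and hence of $pX$, for arbitrarily large $d$.

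Now I transport this back to $pX$: for each such $d$ there is a configuration $c_d$ with $pX \goes{v} c_d \goes{w_d}$. By the stack-height invariant, every such $c_d$ has stack height exactly $h_0 := 1 + h(v)$, a constant independent of $d$. The second half of the invariant then forces $h(w_d) \geq -h_0$, which contradicts $h(w_d) \leq -d$ as soon as $d > h_0$. Thus no finite state $F$ can be trace equivalent to $pX$. The only mildly delicate point in the argument is the pigeonhole step, which is needed in order to locate a single state of $F$ that witnesses arbitrarily deep popping; once this state is fixed and pulled back to $pX$ via trace equivalence, the contradiction is pure stack-height arithmetic inherent to the vPDA model.
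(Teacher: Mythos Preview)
Your argument is correct, and it takes a genuinely different route from the paper's proof. The paper fixes a \emph{single} deep-popping trace $w_1w_2$ with $h(w_2)\leq -n$ (where $n$ is the number of states of $F$), runs it in $F$, and then uses a pigeonhole argument \emph{inside} the execution of $w_2$: some state $q_{rep}$ of $F$ must repeat while the synchronized vPDA stack height has strictly decreased, so the intervening segment $w'$ is a loop in $F$ with $h(w')<0$. Pumping $w'$ produces a trace of $F$ with $h$-value below $-1$, which $pX$ cannot match. Your proof instead applies pigeonhole \emph{across} the family $\{u_d w_d\}_d$, collapsing the intermediate $F$-states after the prefixes $u_d$ to a single $F^*$; replacing each $u_d$ by a fixed $v$ reaching $F^*$ then transports arbitrarily deep popping back to a vPDA configuration of fixed height $1+h(v)$, which is immediately impossible.

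Both arguments exploit the same rigidity (stack height is determined by the trace in a vPDA) and both use pigeonhole on the states of $F$, but yours avoids the internal pumping step and the simultaneous tracking of $F$'s state together with the vPDA stack height. The price is that you need the full quantifier ``for every $d$'' in the definition of unbounded popping, whereas the paper only needs one witness with $d=n$. Either way the proof is short; yours is arguably the cleaner of the two.
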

\begin{proof}
By contradiction. Let $pX$ 
be trace equivalent to some finite state system $F$ with $n$
states. Let us consider a trace $w_1w_2$ such that
$pX \goes{w_1} q\beta \goes{w_2}$ for some $q\beta$ and
$h(w_2) \leq -n$. Such a trace must exist because $pX$
provides unbounded popping. The trace $w_1w_2$ must be
executable also in $F$. However, because $F$ has $n$ states,
during the computation on $w_2$, it must necessarily enter
twice some state $q_{rep}$. 
Because $q\beta$ decreases the stack height by more than $n$
symbols during the computation on $w_2$, we can moreover
assume that the second occurrence of $q_{rep}$ in $F$ happened
when the corresponding pushdown configuration
had the stack height strictly smaller than
in the first occurrence of $q_{rep}$.
This means that such a computation on $w_2$ forms a loop on
a substring $w'$ of $w_2$ which was observed between
the first and the second occurrence of $q_{rep}$ and
$h(w') < 0$.

By repeating the loop under $w'$ (where $w_2=xw'y$) 
in $F$ sufficiently many times ($|w_1|+|x|+|y|+2$ times is surely enough),
$F$ can achieve a trace $w=w_1xw'^{|w_1|+|x|+|y|+2}y$ with $h(w) < -1$.
However, this trace is not possible from $pX$
(any word $w$ such that $pX \goes{w}$ satisfies that $h(w) \geq -1$).
This is a contradiction.
\end{proof}

\begin{lem} \label{lem:reg2}
Let $pX$ be a vPDA process which does not provide unbounded
popping. Then $pX$ is regular w.r.t. bisimilarity.
\end{lem}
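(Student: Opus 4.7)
The plan is to show that from $pX$ only finitely many bisimilarity classes of configurations are reachable; the desired finite state system $F$ is then the quotient of the reachable sub-LTS by $\sim$. Fix a constant $d$ witnessing the failure of unbounded popping: for every reachable configuration $q\beta$ and every word $w$ with $q\beta\goes{w}$ we have $h(w)>-d$. Define a coarse equivalence $\cong$ on reachable configurations by $q\beta\cong q'\beta'$ iff $q=q'$ and either both stacks have length less than $d$ and coincide, or both have length at least $d$ and agree on their topmost $d$ symbols. Since $\cong$ evidently has only finitely many classes, it suffices to prove $\cong\,\subseteq\,\sim$.

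The only nontrivial case is a pair of reachable processes $q\beta,q\beta'$ with $|\beta|,|\beta'|\ge d$ and a common top $d$ symbols; I would prove $q\beta\sim q\beta'$ by describing a winning strategy for Defender in the bisimulation game: Defender always mimics Attacker by applying the same rule of $\Delta$ on the opposite side. The invariant maintained after every round is that the current pair has the shape $(r\gamma\sigma,r\gamma\sigma')$, where $r\in Q$ and the freshly produced top prefix $\gamma\in\Gamma^*$ are shared, and $\sigma,\sigma'$ are the suffixes of $\beta,\beta'$ starting from a common position $m$ (so exactly $m-1$ original symbols have been popped on each side). This invariant is preserved automatically by copying the rule, and for the copy to be well-defined one only needs the topmost symbols to coincide: that is obvious when $\gamma\neq\epsilon$, and in the case $\gamma=\epsilon$ it amounts to $\beta_m=\beta'_m$, which holds as long as $m\le d$.

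The single technical ingredient is the stack-balance identity $h(w)=|\gamma|-(m-1)$, where $w$ is the sequence of actions played from $q\beta$ to the current pair: $|\gamma|$ new symbols sit on top and $m-1$ original symbols have been removed. Combined with the hypothesis $h(w)>-d$ applied to the reachable process $q\beta$, this yields $m\le|\gamma|+d$, which specialises to $m\le d$ precisely when $\gamma=\epsilon$, vindicating the invariant. The obstacle to watch for is the concern that successive pops could slide the top past the initial agreement window of depth $d$, causing the two sides to diverge on the next original symbol; but sliding past position $d$ while $\gamma=\epsilon$ would force $h(w)\le -d$, contradicting the hypothesis on $q\beta$. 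Hence Defender's copying strategy is always legal; every play is either infinite (and so winning for Defender) or ends with the two configurations stuck simultaneously on identical tops (Attacker has no move and loses). It follows that $q\beta\sim q\beta'$, $\cong$ refines $\sim$ on reachable configurations, and the quotient of the reachable LTS by $\cong$ is a finite system bisimilar to $pX$.
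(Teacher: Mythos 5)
Your proof is correct and takes essentially the same route as the paper's: the paper argues that, absent unbounded popping, there is a bound $d_{\mathit{max}}$ such that every reachable configuration need only remember its $d_{\mathit{max}}$ top-most stack symbols, which is precisely your finite-index equivalence $\cong$. Your Defender-copying strategy with the stack-balance invariant $h(w)=|\gamma|-(m-1)$ simply supplies the verification that the paper's two-sentence sketch leaves implicit.
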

\begin{proof}
Assume that $pX$ does not provide unbounded popping. In other
words, there is a constant $d_{\mathit{max}}$ such that
for every process $q\beta$ reachable from $pX$ it is the
case that for any computation starting from $q\beta$,
the stack height $|\beta|$ cannot be decreased by more than
$d_{\mathit{max}}$ symbols. This means that in any reachable
configuration it is sufficient 
to remember only $d_{\mathit{max}}$ top-most stack symbols
and hence the system can be up to bisimilarity 
described as a finite state
system (in general of exponential size).
\end{proof}

\begin{thm} \label{thm:popping}
Let $pX$ be a vPDA process. Then,
for any equivalence relation 
between trace equivalence and bisimilarity,
$pX$ provides unbounded popping if and only if $pX$ is not regular.
\end{thm}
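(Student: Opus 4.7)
The plan is to observe that Theorem~\ref{thm:popping} is essentially an immediate corollary of the two lemmas already proved, combined with the standard fact that for any equivalence $\equiv$ lying between trace equivalence and bisimilarity we have the inclusions $\sim\, \subseteq\, \equiv\, \subseteq\, \approx_{tr}$, where $\approx_{tr}$ denotes trace equivalence. So all that needs to be done is to chase these inclusions in both directions.

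For the forward direction, I would argue by contraposition: suppose $pX$ is regular w.r.t.\ $\equiv$, so there exists a finite state process $F$ with $pX \equiv F$. Since $\equiv\, \subseteq\, \approx_{tr}$, we get $pX \approx_{tr} F$, i.e., $pX$ is regular w.r.t.\ trace equivalence. By the contrapositive of Lemma~\ref{lem:reg1}, $pX$ does not provide unbounded popping. For the reverse direction, assume $pX$ does not provide unbounded popping. By Lemma~\ref{lem:reg2}, there is a finite state process $F$ with $pX \sim F$. Since $\sim\, \subseteq\, \equiv$, we conclude $pX \equiv F$, so $pX$ is regular w.r.t.\ $\equiv$.

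There is no real obstacle here; the only thing worth a short comment in the write-up is to make explicit what ``between trace equivalence and bisimilarity'' means and why the chosen inclusion direction is the one used in each half of the argument (namely, $\equiv\, \subseteq\, \approx_{tr}$ is needed to transport non-regularity \emph{up} from trace equivalence to $\equiv$, while $\sim\, \subseteq\, \equiv$ is needed to transport regularity \emph{down} from bisimilarity to $\equiv$). I would present this as a three-to-five line proof that simply cites Lemmas~\ref{lem:reg1} and~\ref{lem:reg2}, without reproducing any of their content.
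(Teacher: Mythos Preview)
Your proposal is correct and matches the paper's own proof, which simply reads ``Directly from Lemma~\ref{lem:reg1} and Lemma~\ref{lem:reg2}.'' Your write-up is in fact more explicit than the paper about how the inclusions $\sim\,\subseteq\,\equiv\,\subseteq\,\approx_{tr}$ are used, but the underlying argument is identical.
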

\begin{proof}
Directly from Lemma~\ref{lem:reg1} and Lemma~\ref{lem:reg2}.
\end{proof}

We shall now show that unbounded popping property is
decidable in polynomial time and we also
take a closer look at the exact complexity of regularity
checking problems on vPDA, vBPA and v1CA. The results
are presented in the following three lemmas. 

\begin{lem} \label{lem:inP}
Regularity checking of vPDA w.r.t. any equivalence
between trace equivalence and bisimilarity
is decidable in deterministic polynomial time.
\end{lem}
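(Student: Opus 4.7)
By Theorem~\ref{thm:popping}, a vPDA process $pX$ is regular with respect to any equivalence between trace equivalence and bisimilarity iff $pX$ does not provide unbounded popping, so it suffices to decide the absence of unbounded popping in polynomial time. The plan is to reduce this question to reachability and cycle detection in a polynomial-size finite automaton built by combining two standard saturation computations on the underlying pushdown system: the set of configurations reachable from $pX$ and the ``popability'' relation.

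First I would use textbook pushdown-system techniques to compute, in polynomial time, the relation $\mathit{Pop} = \{(q,Y,q') \mid qY \goes{}^{*} q'\}$, where $qY \goes{}^{*} q'$ means that the vPDA can fully pop $Y$ starting from $qY$ and end in state $q'$. The next step is to prove the following characterisation: from a configuration $qY_1Y_2\cdots Y_n\gamma$ one can execute a word $w$ with $h(w)\le -k$ (where $k\le n$) if and only if there exist states $q=q_0,q_1,\ldots,q_k$ with $(q_{i-1},Y_i,q_i)\in\mathit{Pop}$ for every $i$. The nontrivial direction exploits the vPDA stack discipline: to realise a net pop of $k$, the symbols $Y_1,\ldots,Y_k$ must be popped in order (every intermediate push is balanced by a pop before the next $Y_i$ is exposed), and the sub-computations popping each $Y_i$ are precisely the witnesses of a chain in $\mathit{Pop}$; the reverse direction follows by concatenating the witnessing computations.

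With this characterisation in hand, I would encode popable chains as a finite automaton $A$ on state set $Q$ with transitions $q\goes{Y} q'$ iff $(q,Y,q')\in\mathit{Pop}$; a stack prefix $Y_1\cdots Y_k$ then labels a run of $A$ from $q$ exactly when it forms a popable chain of length $k$. Next I would compute a polynomial-size $P$-automaton $\mathcal{B}$ recognising the set of configurations reachable from $pX$ via the standard $\mathit{post}^{*}$ saturation. Taking the product $\mathcal{C}$ of $A$ and $\mathcal{B}$ synchronised on the stack letters yields a polynomial-size NFA, and unbounded popping then corresponds to the existence of a cycle in $\mathcal{C}$ that is (i)~reachable from an initial state $(q,q)$ where $q$ is an actually reachable control state, and (ii)~consists entirely of states that are ``live'' in $\mathcal{B}$, i.e.\ states from which the partial stack can still be extended to a reachable configuration. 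Both conditions are checkable in polynomial time by standard SCC analysis, yielding the desired algorithm.

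The main obstacle is the correctness of the characterisation of maximum net pop in terms of $\mathit{Pop}$-chains, together with the correctness of the product construction. In particular, one must avoid counting popable chains that overrun any reachable stack, which is exactly what the liveness condition on cycles in $\mathcal{C}$ enforces; making this interplay between the $\mathit{Pop}$-saturation and the $\mathit{post}^{*}$-saturation precise is the technical heart of the argument.
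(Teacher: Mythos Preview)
Your approach is correct and uses the same underlying machinery as the paper, namely the polynomial-time $\pre^*/\post^*$ saturation procedures for pushdown systems, but the decomposition is organised differently. The paper bypasses your intermediate ``Pop-chain'' characterisation altogether: it simply observes that $pX$ has unbounded popping iff there is a pair $(q,Y)$ with $pX \in \pre^*(qY\Gamma^*)$ and with $\post^*(qY)\cap \pre^*(\{r\epsilon \mid r\in Q\})$ infinite, and then checks infiniteness of that regular intersection by the usual ``reachable cycle that can still reach an accept state'' test on the product NFA. Your route first proves a lemma (net pop of $k$ is witnessed exactly by a length-$k$ chain in the relation $\mathit{Pop}$), then synchronises the Pop-chain automaton with the $\post^*$ $P$-automaton and searches for a live cycle. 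What your version buys is an explicit structural explanation of why unbounded popping is an automaton-detectable property of reachable stacks; what the paper's version buys is brevity---no auxiliary lemma and no handcrafted product, just an off-the-shelf intersection and an infiniteness test. The two cycle conditions coincide once one notices that your automaton $A$ is essentially the $\pre^*$-automaton for the set of emptyable configurations, so your product $\mathcal C$ and the paper's intersection automaton are the same object viewed from two ends.
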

\begin{proof}
We can check, for every $q \in Q$ and $Y \in \Gamma$,
whether the regular set
$\post^*(qY) \cap \pre^*(\{r\epsilon \mid r \in Q\})$ is infinite.
A nondeterministic finite automaton recognizing this language
can be constructed in polynomial time because
$\pre^*$ and $\post^*$ preserve regularity and
are polynomial time computable 
(see e.g.~\cite{CONCUR::BouajjaniEM1997,EHRS00}),
and the check whether the resulting automaton has an infinite
language amounts to searching (in polynomial time)
for a reachable cycle from which there is a path to some accept state.
Observe now that if the above mentioned regular language is infinite, 
then $qY$ has infinitely many different successors
with higher and higher stacks such that all of them can
be completely emptied.
To see whether a given vPDA process $pX$ provides unbounded
popping (and hence it is nonregular due to Theorem~\ref{thm:popping}),
it is now enough to test whether
$pX \in \pre^*(qY\Gamma^*)$ for some $qY$ satisfying
the condition above. The test can be again done in polynomial 
time~\cite{CONCUR::BouajjaniEM1997,EHRS00}.
\end{proof}

\begin{lem} \label{lem:Phard}
Regularity checking of vBPA w.r.t. any equivalence
between trace equivalence and bisimilarity
is P-hard.
\end{lem}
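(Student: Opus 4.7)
My plan is to reduce from the monotone circuit value problem (MCVP), which is P-complete, after replacing regularity by the absence of unbounded popping via Theorem~\ref{thm:popping}. The idea is to encode an AND/OR circuit as a vBPA in which a stack symbol is nullable iff the corresponding gate evaluates to true, and then to amplify nullability of the output gate into unbounded popping via a simple ``push loop''.

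Given a monotone circuit, I introduce one stack symbol $A_g$ per gate or input, with three actions $c \in \actc$, $i \in \acti$, $r \in \actr$. For each AND gate $g = g_1 \wedge g_2$ I add the call rule $A_g \goes{c} A_{g_1} A_{g_2}$; for each OR gate $g = g_1 \vee g_2$ the two internal rules $A_g \goes{i} A_{g_1}$ and $A_g \goes{i} A_{g_2}$; for each true input $x$ the return rule $A_x \goes{r} \epsilon$; and for each false input no rule at all. A routine induction on gate depth, valid because the circuit is a DAG, shows that $A_g \goes{}^* \epsilon$ iff $g$ evaluates to true.

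Next I add fresh symbols $Z, Y'$ and the three rules $Z \goes{c} Z A_o$, $Z \goes{i} Y'$, $Y' \goes{r} \epsilon$, where $A_o$ is the output gate. The central claim is that $Z$ provides unbounded popping iff $A_o \goes{}^* \epsilon$. The forward direction is easy: if $A_o$ is nullable, iterate the first rule $n$ times to reach $Z A_o^n$, apply $Z \goes{i} Y' \goes{r} \epsilon$ to expose $A_o^n$, and collapse all $n$ copies of $A_o$, yielding a computation $w$ with $h(w) = -n$ for every $n$. For the converse, if $A_o$ is not nullable then every derivative $\beta$ of $A_o$ is nonempty (otherwise $A_o$ would be nullable), so in any reachable configuration of the form $\beta A_o^k$ the $k$ deep copies of $A_o$ form a permanent barrier that no computation can pop past.

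The hard part is the clean verification of that barrier argument: I have to enumerate the reachable configurations --- essentially $Z A_o^n$, $Y' A_o^n$, and the transformations $\beta A_o^{n-1}$ with $A_o \goes{}^* \beta \neq \epsilon$ together with further top-rewritings of $\beta$ --- and observe that the ``top fragment'' above the trapped $A_o$-tail (namely $Z$, $Y'$, or a derivative of a single $A_o$) has net pop potential bounded by a circuit-dependent constant independent of $n$. Once this is established, Theorem~\ref{thm:popping} gives that $Z$ is regular w.r.t.\ any equivalence between trace equivalence and bisimilarity iff the output gate of the given circuit is false; since MCVP is P-complete, P is closed under complement, and the construction is logspace, P-hardness of regularity on vBPA follows.
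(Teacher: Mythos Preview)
Your proof is correct and follows the same overall strategy as the paper: reduce from the monotone circuit value problem by encoding gate evaluation as nullability of a stack symbol, then amplify nullability of the output into unbounded popping and invoke Theorem~\ref{thm:popping}. The paper factors through the intermediate problem BPA-EMPTY (whose P-hardness via MCVP it cites from~\cite{GHR:Pcompleteness}) instead of encoding the circuit directly, but that is inessential; your direct encoding of AND/OR gates as push/internal rules is exactly the standard one.

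The one difference worth noting is the placement of the amplification gadget. The paper puts the pump \emph{after} the emptiness test: it adds $X' \goes{c} XB$, $B \goes{e} C$, $C \goes{c} CD$, $C \goes{r} \epsilon$, $D \goes{r} \epsilon$, so the unbounded $CD^n$ loop becomes reachable only once $X$ has been emptied. Since the BPA-EMPTY instance is assumed to have only finitely many configurations reachable from $X$, the case $L(X)=\emptyset$ immediately gives finitely many configurations reachable from $X'$, hence regularity w.r.t.\ bisimilarity, with no further work. Your gadget pumps \emph{before} the test ($Z \goes{c} Z A_o$), so $Z$ has infinitely many reachable configurations in either case, and you must run the barrier/bounded-popping argument for the $A_o$-not-nullable direction. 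That argument is fine --- the derivatives of a single $A_o$ form a finite set of bounded height because the circuit is a DAG, which gives the circuit-dependent bound you need --- but the paper's gadget placement turns that direction into a one-liner.
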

\begin{proof}
In order to argue that regularity for 
vBPA is P-hard we first consider the following problem.
Let $\Delta$ be a BPA system over the set of actions $\act$
and a stack alphabet $\Gamma$. Let $X \in \Gamma$.
The language of $X$ recognized by the empty stack is defined as 
$L(X) = \{ w \in \act^* \mid X \goes{w} \epsilon \}$. It is known that
the problem whether $L(X)=\emptyset$ (which we shall call BPA-EMPTY)
is P-hard, even under the assumption that there are only finitely
many configurations reachable from the process $X$, and that every rule
$Y \goes{a} \alpha$ in $\Delta$ satisfies that $|\alpha| \leq 2$.
P-hardness of BPA-EMPTY follows from a simple logarithmic
space reduction from the Monotone Circuit Value problem
(see~\cite[p. 177]{GHR:Pcompleteness} for details). We shall reduce BPA-EMPTY
to regularity checking on vBPA. Let $\Delta$ together with a stack 
symbol $X \in \Gamma$, 
which has finitely many reachable states,
be a given instance of BPA-EMPTY. We construct (in logarithmic space)
a vBPA system $\Delta'$ over the partitioned action alphabet 
$\actc=\{c\}$, $\actr=\{r\}$, $\acti=\{i,e\}$ and 
the stack alphabet $\Gamma'= \Gamma \cup \{X',B,C,D\}$, where
$X'$, $B$, $C$ and $D$ are fresh stack symbols, such that 
\begin{enumerate}[$\bullet$]
\item if $L(X)=\emptyset$ in $\Delta$ then $X'$ is a regular process
in $\Delta'$ w.r.t. bisimilarity, and 
\item if $L(X)\not=\emptyset$ in $\Delta$ then
$X'$ is not a regular process in $\Delta'$ w.r.t. trace equivalence. 
\end{enumerate}
We build $\Delta'$ from $\Delta$ as follows: 
\begin{enumerate}[$\bullet$]
\item for every $(Y \goes{a} \alpha) \in \Delta$ where $|\alpha|=2$
we add to $\Delta'$ the rule $Y \goes{c} \alpha$,
\item for every $(Y \goes{a} \alpha) \in \Delta$ where $|\alpha|=0$
we add to $\Delta'$ the rule $Y \goes{r} \alpha$, and
\item for every $(Y \goes{a} \alpha) \in \Delta$ where $|\alpha|=1$
we add to $\Delta'$ the rule $Y \goes{i} \alpha$.
\end{enumerate}
This does not change the answer to the emptiness problem and 
the system $\Delta'$ becomes
visibly BPA. If we now add the following rules to $\Delta'$
$$
X' \goes{c} XB \hspace{8mm}
B \goes{e} C \hspace{8mm}
C \goes{c} CD \hspace{8mm}
C \goes{r} \epsilon \hspace{8mm}
D \goes{r} \epsilon
$$
then it is easy to see that $X'$ is regular if and only if
$L(X)= \emptyset$.
Obviously, $\Delta'$ is still visibly BPA. Hence regularity checking
on vBPA is P-hard for any equivalence between
trace equivalence and bisimilarity. 
\end{proof}

\begin{lem} \label{lem:NLcompl}
Regularity checking of v1CA w.r.t. any equivalence
between trace equivalence and bisimilarity
is NL-complete. 
\end{lem}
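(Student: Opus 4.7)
My plan is to prove both NL membership and NL-hardness.

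For the NL upper bound, I plan to invoke Theorem~\ref{thm:popping}: a v1CA process $p(0)$ is regular (with respect to any equivalence between trace equivalence and bisimilarity) exactly when it does not provide unbounded popping. Since NL is closed under complement, it suffices to check unbounded popping in NL. To do this I will characterize unbounded popping by a graph-theoretic witness on the finite control graph. Let $G_I$ denote the directed graph on $Q$ whose edges come from the $I$-rules $qI \goes{a} q'\alpha$ and are weighted by $h(a) \in \{-1, 0, +1\}$. My claim is that $p(0)$ provides unbounded popping iff there exist control states $q_+, q_- \in Q$ such that (a) $q_+$ is v1CA-reachable from $p(0)$, (b) some v1CA-computation forms a cycle at $q_+$ with strictly positive net $h$-weight, and (c) $q_-$ is reachable from $q_+$ in $G_I$ and lies on a cycle of $G_I$ with strictly negative $h$-weight.

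For the ``$\Leftarrow$'' direction, first reach $q_+$ via (a), iterate the positive cycle from (b) sufficiently many times so that the counter at $q_+$ exceeds the worst-case depth of any $G_I$-traversal from $q_+$ to the negative cycle, and then iterate the negative cycle from (c) to witness arbitrarily large popping. The ``$\Rightarrow$'' direction is by pigeonhole: if for every $d$ there is a v1CA-computation reaching some configuration from which the counter can drop by at least $d$, then for sufficiently large $d$ the computation must reuse control states both while climbing and while descending, and these repetitions furnish the required positive and negative cycles. Each of (a)--(c) is NL-verifiable: control-state reachability in one-counter automata is known to be in NL; graph reachability in $G_I$ is the standard NL problem; and detection of a positive or negative cycle in a $\{-1,0,+1\}$-weighted graph can be done in NL by guessing a suitable edge inside an SCC together with a short balancing path back to its source. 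The overall algorithm guesses $q_+, q_-$ and their witnesses in logspace.

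For NL-hardness I plan to reduce from directed graph reachability. Given a digraph $G = (V, E)$ with distinguished vertices $s$ and $t$, I will construct a v1CA with control states $Q = V$, action partition $\actc = \{c\}$, $\actr = \{r\}$, $\acti = \{i\}$, and the following rules: for every edge $(u, v) \in E$, the zero-level internal rule $uZ \goes{i} vZ$; and the three extra rules $tZ \goes{c} tIZ$, $tI \goes{c} tII$, and $tI \goes{r} t$. Starting from $s(0)$, if $t$ is reachable from $s$ in $G$, we travel to $t(0)$ using the internal zero-rules, then push arbitrarily high with the $c$-rules and pop all the way back with the $r$-rule, yielding unbounded popping and thus nonregularity; if $t$ is not reachable, no push rule is ever enabled from any reachable configuration (the only push from $Z$ is at $t$), so the counter stays identically $0$, the reachable LTS is finite, and $s(0)$ is trivially regular. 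Since NL is closed under complement, this gives NL-hardness.

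The main technical obstacle will be justifying condition (b) cleanly: even if the first v1CA-reachable configuration at $q_+$ has a modest counter, iterating a v1CA-cycle with positive net $h$-weight and internal dip bounded by $|Q|$ amplifies the counter linearly in the number of iterations, eventually exceeding any fixed depth needed for the subsequent $G_I$-excursion. The remaining details --- packaging the guess of a positive v1CA-cycle as a short NL witness via the semi-linear description of reachable counter values --- are routine one-counter automaton machinery.
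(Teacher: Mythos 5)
Your NL-hardness half is sound and, if anything, more explicit than the paper's (which just remarks that regularity naturally contains finite-state reachability, via a construction like the one in Lemma~\ref{lem:Phard}); your upper-bound strategy also has the same overall shape as the paper's (guess a reachable state with a counter-increasing cycle, a transfer path, and a counter-decreasing cycle, all verifiable in logspace). However, the characterization (a)--(c) on which your algorithm rests is false as stated, because condition (b) permits the positive cycle at $q_+$ to pass through counter value zero, and such a cycle need not be iterable: the $Z$-rules it uses are disabled once the counter has been raised. Concretely, take $\actc=\{c\}$, $\actr=\{r\}$, $\acti=\{i\}$ and rules $pZ \goes{c} q_+IZ$, \ $q_+I \goes{r} u$, \ $uZ \goes{c} sIZ$, \ $sI \goes{c} q_+II$, \ $q_+I \goes{i} q_-I$, \ $q_-I \goes{r} q_-$. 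Then (a) $q_+$ is reachable from $p(0)$; (b) $q_+(1) \goes{r} u(0) \goes{c} s(1) \goes{c} q_+(2)$ is a cycle at $q_+$ of net weight $+1$; and (c) $q_-$ is $G_I$-reachable from $q_+$ and carries a negative $G_I$-cycle (the self-loop $q_-I \goes{r} q_-$). Yet from $q_+(2)$ the pop leads to $u(1)$, which is stuck (the only rule for $u$ is a $Z$-rule), so every configuration reachable from $p(0)$ has counter value at most $2$: the process does not provide unbounded popping and is regular, while your conditions declare it nonregular. A second, related omission: even after restricting (b) to zero-free cycles (cycles of $G_I$), condition (a) must require that $q_+$ be reachable \emph{with a sufficiently large counter value} --- if $q_+$ is reachable only with counter $0$, no $I$-rule cycle can even begin, so plain control-state reachability does not suffice.

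These two points are exactly where the paper's proof does its work: its witness demands $q_0(0) \goes{w_1} p(n_1)$ with $n_1 \geq n$ (where $n=|Q|$), and restricts the pumping cycle, the transfer path and the popping cycle to words of length at most $n$ executed from counter value exactly $n$; this simultaneously rules out zero visits (so the computations transfer verbatim to all higher counter values, which is what legitimizes iteration) and makes every guess checkable step-by-step in NL, with the extra prefix bound $h(w') \leq n^2+2n$ on $w_1$ ensuring the reachability phase can be simulated with a logarithmic-size counter; the accompanying pumping arguments show all these restrictions are without loss of generality. Your ``$\Rightarrow$'' pigeonhole sketch can be repaired in the same spirit --- extract the positive cycle only from the suffix of the climb after the last time the counter is $\leq n$, so that the repetition found is genuinely zero-free --- but as written, the biconditional your algorithm decides is not equivalent to unbounded popping, so the upper-bound half of your proof has a genuine gap, and the ``routine one-counter machinery'' you defer is precisely the nontrivial content of the paper's argument.
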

\begin{proof}
NL-hardness follows immediately from the fact
that the regularity checking problem naturally contains the reachability
problem on finite state systems (by a similar construction as 
in Lemma~\ref{lem:Phard}). 
The containment in nondeterministic logarithmic space
is by the observation that a given
visibly one-counter process $q_0(0)$ in $\Delta$, where $\Delta$
has $n$ control states, provides unbounded popping if
and only if there exist two control states $p$ and $p'$ such that 
\begin{enumerate}[(1)]
\item $q_0(0) \goes{w_1} p(n_1)$ for some $w_1$ and $n_1$ such that
$n_1 \geq n$ and $h(w') \leq n^2 + 2n$ for every prefix $w'$ of $w_1$, 
\item $p(n) \goes{w_2} p(n_2)$ for some $w_2$ and $n_2$ such that
$n_2 > n$ and $|w_2| \leq n$, 
\item $p(n) \goes{w_3} p'(n_3)$ for some $w_3$ and $n_3$
such that $|w_3| \leq n$, and
\item $p'(n) \goes{w_4} p'(n_4)$ for some $w_4$ and $n_4$ 
such that $n_4 < n$ and $|w_4| \leq n$. 
\end{enumerate}
Note that due to the restrictions on the lengths of the action
sequences in points 2., 3. and 4., no transition is performed
from any configuration where the counter value is zero.
Hence the same computations are possible also for any higher initial
counter value.

The idea is that the process $q_0(0)$ provides unbounded popping
iff  there is a possibility to arbitrarily increase the counter value 
(by means of the cycle from the control state $p$ in condition 2.) 
and then reach a control state $p'$ (condition 3.)
in which the counter value can be arbitrarily decreased 
(condition 4.). Initially, condition 1. guarantees that the state $p$
can be reached with a sufficiently large counter value.
The extra requirement $h(w') \leq n^2 +2n$ for any prefix $w'$ of $w_1$
in condition 1. is harmless because if the state $p(n_1)$ is reachable
then it is not necessary that the counter value during the computation
grows to more than $n^2 +2n$. To show that, we first observe that
we can require that the counter value $n_1$ satisfies $n \leq n_1 \leq 2n$.
For the sake of contradiction, let the control state $p$
be reachable from the initial configuration such that the minimal
counter value $n_1$ is greater than $2n$. We will show that we can then
reach $p$ with a counter value strictly smaller (while still at least $n$). 
Let us consider
the suffix of this computation where all configurations have
the counter values greater or equal to $n$. Now, in the region of configurations
with the counter values between $n$ and $2n$, there are necessarily
two configurations $r(n')$ and $r(n'')$ for some control state $r$
such that $n \leq n' < n''  \leq 2n$
and $r(n')$ precedes $r(n'')$. By removing the part of the computation
between $r(n')$ and $r(n'')$ we achieve a computation that reaches the
control state $p$ with a strictly smaller counter value.

We can hence assume that $p(n_1)$ is reachable such that 
$n \leq n_1 \leq 2n$. 
Should the counter grow to more than $n^2 +2n$ during this computation
then there would necessarily appear 
two configurations with the same counter value (greater than $2n$)
and the same control state and hence we could find a shorter sequence
of actions to reach $p(n_1)$. 

We shall now argue that the extra restrictions in conditions
2., 3. and 4. are harmless too. In condition 2., for the sake
of contradiction, assume that from the control state $p$
we can reach $p$ with a higher counter value (and never test
for zero during the computation) such that the shortest sequence
of actions to achieve this is strictly longer than $n$. On such
a sequence, there are necessarily two configurations $r(n')$ and
$r(n'')$ with the same control state $r$
such that $r(n')$ precedes $r(n'')$. If
$n' \geq n''$  than we can simply remove the part of the computation between 
these two configurations and reach the control state $p$ with a possibly
even higher counter value than before. If $n' < n''$ then we could
have initially selected the control state $r$ instead of $p$, because
there is a loop on the control state $r$ which increases the counter value.
Similarly, we can show that the restrictions in points 3. and 4. are
harmless too.
 
Finally, we note that
the control states $p$ and $p'$ above
can be nondeterministically guessed and the conditions 1.--4.
verified in nondeterministic logarithmic space. Hence the regularity
checking problem for visibly one-counter automata is in NL.
\end{proof}

We finish by a theorem summarizing the complexity results
proved in Lemma~\ref{lem:inP}, Lemma~\ref{lem:Phard} and
Lemma~\ref{lem:NLcompl}.

\begin{thm}
The regularity checking problem w.r.t. any equivalence
between trace equivalence and bisimilarity
(in particular also w.r.t. any equivalence considered in this article)
is P-complete for vPDA and vBPA and NL-complete for v1CA.
\end{thm}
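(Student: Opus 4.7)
The theorem is explicitly a summary statement, so my plan is to assemble the three preceding lemmas into the three complexity bounds and check that the classes nest correctly. The key observation is that vBPA and v1CA are syntactic subclasses of vPDA, so any upper bound established for vPDA transfers to vBPA and v1CA, while any hardness result established for vBPA (or v1CA) transfers upward to vPDA.

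First, for P-completeness on vPDA and vBPA, I would argue the upper bound and the lower bound separately. The upper bound comes from Lemma~\ref{lem:inP}, which gives a polynomial time algorithm for regularity of vPDA w.r.t.\ any equivalence between trace equivalence and bisimilarity; since every vBPA is a vPDA (with a single control state), the same algorithm applies and yields a polynomial time upper bound for vBPA as well. The matching lower bound comes from Lemma~\ref{lem:Phard}, which shows P-hardness already for vBPA; since vBPA is a subclass of vPDA, this lower bound carries over to vPDA. Combining the two yields P-completeness for both classes simultaneously.

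Second, for NL-completeness on v1CA, the statement is a direct restatement of Lemma~\ref{lem:NLcompl}, which establishes both the NL upper bound (via the characterisation of unbounded popping by the four control state conditions that can be guessed and verified in nondeterministic logarithmic space) and NL-hardness (by reduction from reachability on finite state systems embedded into the construction of Lemma~\ref{lem:Phard}).

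There is no substantive obstacle here; the only thing to be careful about is that the equivalences quantified over in each of the three lemmas are the same class (anything between trace equivalence and bisimilarity), which follows because Theorem~\ref{thm:popping} characterises regularity uniformly for this whole range via the unbounded popping property. Hence the three resulting complexity bounds apply uniformly and the theorem follows immediately by combining Lemmas~\ref{lem:inP}, \ref{lem:Phard}, and~\ref{lem:NLcompl}.
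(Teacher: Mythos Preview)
Your proposal is correct and matches the paper's approach exactly: the paper itself presents this theorem as an immediate summary of Lemmas~\ref{lem:inP}, \ref{lem:Phard}, and~\ref{lem:NLcompl}, with no additional argument. Your write-up is in fact more explicit than the paper about how the subclass relationships between vBPA, v1CA, and vPDA transfer the upper and lower bounds, which is helpful but not strictly required.
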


\section{Conclusion} \label{sec:conclusion}
In the following table we provide a comparison of 
bisimulation, simulation and regularity (w.r.t. bisimilarity)
checking on PDA, 1CA, BPA and their subclasses vPDA, v1CA, vBPA. 
Results achieved in this article are in bold. \\

\begin{center}
\renewcommand{\arraystretch}{2.2}
\begin{tabular}{|c||c|c|c|}
\hline
& $\sim$ & $\simul$ and $\simuleq$ & $\sim$-regularity\\
\hline
\hline
PDA & \onthetop{\onthetop{decidable}{\cite{Senizergues:SIAM:05}}}
         {\onthetop{EXPTIME-hard}{\cite{KucMay:MFCS:02}}} & 
         \onthetop{undecidable}{\cite{GH:BPA-other}} & 
      \onthetop{?}
         {\onthetop{EXPTIME-hard}{\cite{KucMay:MFCS:02,Srba:ICALP:2002}}} \\
\hline
vPDA & 
\onthetop{\bf in EXPTIME}{\onthetop{EXPTIME-hard}{\cite{KucMay:MFCS:02}}} 
& \onthetop{\bf in EXPTIME}{\onthetop{EXPTIME-hard}{\cite{KucMay:MFCS:02}}} & 
{\bf P-complete} \\ 
\hline
1CA & \onthetop{decidable~\cite{Jancar:IC:00}}{\bf PSPACE-hard} &
\onthetop{undecidable}{\cite{JMS:99}}
 &  \onthetop{\onthetop{decidable}{\cite{Jancar:IC:00}}}
{\onthetop{P-hard}{\cite{BGS:92,Srba:ICALP:2002}}} \\
\hline
v1CA & {\bf PSPACE-complete} & {\bf PSPACE-complete} & 
{\bf NL-complete} 
\\
\hline
BPA &  \onthetop{\onthetop{in $2$-EXPTIME}{\cite{BCS:elementaryBPA}}}
         {\onthetop{PSPACE-hard}{\cite{Srba:ICALP:2002}}} & 
         \onthetop{undecidable}{\cite{GH:BPA-other}}
 & 
\onthetop{\onthetop{in $2$-EXPTIME}
         {\cite{BCS:BPA-regularity,BCS:elementaryBPA}}}
         {\onthetop{PSPACE-hard}{\cite{Srba:ICALP:2002}}} \\
\hline
vBPA & \onthetop{\bf in P}{\onthetop{P-hard}{\cite{BGS:92}}}
& \onthetop{\bf in P}{\onthetop{P-hard}{\cite{SawaJancar:PTIME-hard:CaI:2005}}} & 
{\bf P-complete}
 \\
\hline
\end{tabular}
\end{center}

\mbox{ } \\

In fact, our results about 
EXPTIME-completeness for vPDA, PSPACE-completeness
for v1CA and P-completeness for vBPA hold for all
preorders and equivalences between simulation preorder and
bisimulation equivalence studied in the literature
(like completed simulation, ready simulation and 2-nested
simulation).
The results confirm a general trend seen in the classical language
theory of pushdown automata: a relatively minor restriction
(from the practical point of view) of being able to distinguish 
call, return and internal actions often significantly improves
the complexity of the studied problems and sometimes even
changes undecidable problems into decidable ones, moreover with reasonable
complexity bounds. 

All the upper bounds proved in this article are matched by the corresponding
lower bounds. Here the most interesting result is PSPACE-hardness
of preorder/equivalence checking on v1CA for all relations between
simulation preorder and bisimulation equivalence. As noted in
Remark~\ref{rem:1C}, this proof improves also a number of other
complexity lower bounds for problems on standard one-counter nets and
one-counter automata, which were previously known to be only
DP-hard (DP-hardness is, likely, only a slightly stronger result than 
NP and co-NP hardness).

Finally, we have proved that for all the studied equivalences,
the regularity problem is decidable in polynomial time, more precisely,
P-complete for vPDA and vBPA and NL-complete for v1CA. Checking 
whether an infinite state process is equivalent to some 
regular one is a relevant question because many problems 
about such a process
can be answered by verifying the equivalent finite state system
and for finite state systems many efficient algorithms have been developed. 
A rather interesting observation is that preorder/equivalence 
checking on vBPA for preorders/equivalences between simulation 
and bisimilarity can be polynomially
translated to verification problems on finite state systems.
On the other hand, the class of vBPA processes is more expressive 
than the class of finite state processes and hence the question
whether for a given vPDA (or v1CA) process there is some equivalent vBPA
process is of a particular interest.
Another open problem is whether
the unbounded popping property for visibly pushdown automata can be
generalized so that it characterizes regularity also on vPDA
that can perform return actions even on the empty stack.

In the present article we did not consider any weak preorder/equivalences
as non-observable pushing and popping actions will immediately break
the synchronization of the stacks of the processes and the visibility 
constraint would not be usable any more.

\section*{Acknowledgement}
I would like to thank Markus Lohrey
for a discussion at ETAPS'06 and for a reference to 
PSPACE-completeness of the emptiness problem
for alternating automata over a one-letter alphabet. 
My thanks go also to the anonymous referees of CSL'06 and LMCS 
for their useful comments and for suggesting the P-hardness proof 
of regularity checking for vBPA.


\end{document}